\title{First-Order Logic with Connectivity Operators} %TODO Please add
\titlerunning{First-Order Logic with Connectivity Operators} %TODO optional, please use if title is longer than one line
\author{Nicole Schirrmacher}{University of Bremen, Germany}{schirrmacher@uni-bremen.de}{https://orcid.org/0000-0002-1740-7478}{}%TODO mandatory, please use full name; only 1 author per \author macro; first two parameters are mandatory, other parameters can be empty. Please provide at least the name of the affiliation and the country. The full address is optional
\author{Sebastian Siebertz}{University of Bremen, Germany}{siebertz@uni-bremen.de}{https://orcid.org/0000-0002-6347-1198}{}
\author{Alexandre Vigny}{University of Bremen, Germany}{vigny@uni-bremen.de}{https://orcid.org/0000-0002-4298-8876}{}
\authorrunning{N.\ Schirrmacher, S.\ Siebertz and A.\ Vigny} %TODO mandatory. First: Use abbreviated first/middle names. Second (only in severe cases): Use first author plus 'et al.'
\keywords{First-order logic, graph theory, connectivity} %TODO mandatory; please add comma-separated list of keywords
\newcommand\claimref[2]{\hyperref[{#1}]{#2}}
\newcommand{\FO}{\mathrm{FO}}
\newcommand{\Conn}{\mathrm{conn}}
\newcommand{\FOconn}{\mathrm{FO}+\mathrm{conn}}
\newcommand{\TC}{\mathrm{TC}}
\newcommand{\MSO}{\mathrm{MSO}}
\newcommand{\Cc}{\mathscr{C}}
\newcommand{\Ff}{\mathcal{F}}
\newcommand{\DP}{\mathrm{DP}}
\newcommand{\FODP}{\mathrm{FO+DP}}
\newcommand{\del}{\mathrm{del}}
\newcommand{\disjp}{\mathrm{disjoint}\text{-}\mathrm{paths}}
\newcommand{\conn}{\mathrm{conn}}
\renewcommand{\phi}{\varphi}
\newcommand{\minor}{\preccurlyeq}
\newcommand{\strA}{\mathfrak{A}}
\newcommand{\strB}{\mathfrak{B}}
\begin{document}

\maketitle

%TODO mandatory: add short abstract of the document
% !TEX root = main.tex

\begin{abstract}
  First-order logic ($\FO$) can express many algorithmic problems on graphs,
  such as the independent set and dominating set problem parameterized
  by solution size. On the other hand, $\FO$ cannot
  express the very simple algorithmic question whether two vertices
  are connected. We enrich $\FO$ with connectivity
  predicates that are tailored to express algorithmic graph properties that
  are commonly studied in parameterized algorithmics. By adding
  the atomic predicates
  $\conn_k(x,y,z_1,\ldots, z_k)$ that hold true in a graph if there exists a
  path between (the valuations of) $x$ and $y$ after (the valuations of)
  $z_1,\ldots, z_k$ have been deleted, we obtain \emph{separator logic}
  $\FOconn$. We show that separator logic
  can express many interesting problems such as the feedback vertex
  set problem and elimination distance problems to first-order
  definable classes. Denote by $\FOconn_k$ the fragment of
  separator logic that is restricted to connectivity predicates with
  at most $k+2$ variables (that is, at most $k$ deletions).
  We show that $\FOconn_{k+1}$ is strictly
  more expressive than $\FOconn_k$ for all $k\geq 0$.
We then study the limitations of separator logic
  and prove that it cannot express planarity, and, in particular, not the
  disjoint paths problem. We obtain the stronger \emph{disjoint-paths
  logic} $\FODP$ by adding the atomic predicates
  $\disjp_k[(x_1,y_1),\ldots, (x_k,y_k)]$ that evaluate to true if there are
  internally vertex-disjoint paths between (the valuations of) $x_i$ and
  $y_i$ for all~$1\leq i\leq k$. Disjoint-paths logic can express the disjoint
  paths problem, the problem
  of (topological) minor containment, the problem of hitting (topological)
  minors, and many more. Again we show that the fragments $\FODP_k$
  that use predicates for at most $k$ disjoint paths form a strict
  hierarchy of expressiveness.
  Finally, we compare the expressive power of the
  new logics with that of transitive-closure logics and monadic second-order logic.
\end{abstract}

\pagebreak
% !TEX root = main.tex

\section{Introduction}
\label{sec:introduction}

Logic provides a very elegant way of formally describing computational
problems. Fagin's celebrated result in 1974~\cite{fagin1974generalized} established that
existential second-order logic captures the complexity class \textsc{NP}. Fagin
thereby provided a machine-independent characterization of a complexity
class and initiated the field of descriptive complexity theory. Many other
complexity classes were later characterized by logics in this theory. Today
it remains one of the major open problems whether there exists a logic
capturing \textsc{PTime}.

In 1990 Courcelle proved that every graph property definable in
monadic second-order logic (MSO) can be decided in linear time on graphs of
bounded treewidth~\cite{courcelle1990monadic}. This theorem
has a much more algorithmic (rather than a complexity-theoretic) flavor,
in the sense that, from a logical description of a problem, it derives an
algorithmic approach on how to solve it on certain graph classes.
Grohe in his seminal survey coined the term \emph{algorithmic
meta-theorem} for such  theorems that provide general conditions
on a problem and on the input instances that, when satisfied, imply
the existence of an efficient algorithm for the problem~\cite{grohe2008logic}.
Courcelle's theorem for $\MSO$ was extended
to graph classes with bounded cliquewidth~\cite{courcelle2000linear}
and it is known that these are essentially the most general graph classes
on which efficient MSO model-checking~\cite{ganian2014lower,kreutzer2010lower}
is possible.
MSO is a powerful logic that can express many important algorithmic
properties on graphs. With quantification over edges, we can for example
express the existence of a Hamiltonian path, the existence of a fixed minor
or topological minor, the disjoint paths problem, and many deletion
problems. For a property~$\Pi$, the task in the $\Pi$-deletion problem
is to find in a given graph $G$ a minimum-size subset $S$ of $V(G)$ such that
the graph $G-S$ obtained from $G$ by removing $S$ has the property~$\Pi$. Important examples of $\Pi$-deletion problems
are the feedback vertex set problem, the
odd cycle transversal problem, or the problem of hitting all minors or
topological minors from a given list $\Ff$. Also, many elimination distance
problems recently studied~\cite{bulian2016graph} in parameterized algorithmics can be expressed
in MSO. However, as we have seen, this expressiveness comes at the
price of algorithmic intractability already on very restricted graph classes.
This cannot be a surprise as e.g.\ the Hamiltonian path problem is NP-complete
already on planar graphs of maximum degree $3$~\cite{buro2000simple}.
%\alex{I don't find exactly that in~\cite{garey1974some}. I find :
%\begin{enumerate}
%  \item 3 colorability VS planar + degree at most 4.
%  \item Undirected Hamiltonian circuit VS degree at most 3
%  \item directed Hamiltonian path VS planar + in-degree $\le 3$ , out-degree $\le 4$
%\end{enumerate}
%See thm 2.1, 2.2, 2.3 here \url{https://sci-hub.st/10.1145/800119.803884}}

First-order logic (FO) is much weaker than MSO and consequently, the
model-checking problem can be solved efficiently on much more general
graph classes. FO model-checking is fixed-parameter tractable on a
subgraph-closed class $\Cc$ if and only
if $\Cc$ is nowhere dense~\cite{grohe2017deciding} and a recent
breakthrough result showed that it is
fixed-parameter tractable on a class $\Cc$ of ordered graphs if and only
if $\Cc$ has bounded twin-width~\cite{bonnet2021twin}. FO is weaker
than MSO but it can still express many important problems
such as the independent set problem and dominating set problem parameterized
by solution size, the Steiner tree problem parameterized by the number of Steiner vertices, and many more problems. On the other hand, first-order logic
cannot even express the algorithmically extremely simple problem of
whether a graph is connected. Also, the other algorithmic problems
mentioned before are not expressible in FO, even though some of
them are fixed-parameter tractable on general graphs. For example,
we can efficiently test for a fixed minor or topological minor and
solve the disjoint paths problem~\cite{robertson1995graph}. Many
$\Pi$-deletion problems are fixed-parameter tractable, see
e.g.~\cite{book15parameterized-algorithms,fomin2020hitting,reed2004finding},
as well as many elimination distance problems~\cite{agrawal2021fpt,FominGST20}.

The fact that first-order logic can only express local properties is classically
addressed by adding transitive-closure or fixed-point operators,
see e.g.~\cite{ebbinghaus2005finite,gradel2007finite,libkin2013elements}. Unfortunately,
this again comes at the price of intractable model-checking for very restricted
graph classes. For example, even the model-checking problem for
the very restricted monadic transitive-closure logic $\mathrm{TC}^1$
studied by Grohe~\cite{grohe2008logic}, is $\mathrm{AW}[\star]$-hard
on planar graphs of maximum degree at most~3~\cite[Theorem~7.3]{grohe2008logic}.
Also, these logics fall short of being able
to express all of the above mentioned algorithmic graph problems studied in recent
parameterized algorithmics.

This motivates our present work in which we enrich first-order logic
with basic connectivity predicates. The extensions are tailored
to express algorithmic graph properties that are studied in recent
parameterized algorithmics. We can add the atomic predicate
$\conn_0(x,y)$ that evaluates to true on a graph $G$
if (the valuations of) $x$ and $y$ are connected in $G$. This
predicate easily generalizes to directed graphs but for simplicity,
we work with undirected graphs only. Of course, with this predicate we
can express connectivity of graphs, however, it falls short of
expressing other interesting properties, e.g.\ %it is easy to see that
it cannot express that a graph is acyclic. We hence introduce
more general predicates $\conn_k(x,y,z_1,\ldots, z_k)$,
parameterized by a number $k$, that evaluate to true on a graph $G$
if (the valuations of) $x$ and~$y$ are connected in $G$ once
(the valuations of) $z_1,\ldots, z_k$ have been deleted.
%a second operator $\del(x)[\phi]$, where $\phi$ is a formula
%that does not have $x$ as a free variable, that evaluates $\phi$ with
%(the valuation of) $x$ deleted.
The interplay of these predicates with the usual nesting of
first-order quantification
makes the new logic $\FOconn$ already quite powerful.
For example, we can express simple properties such as
%$2$-connectivity by $\forall x\big(\del(x)[\forall y\forall z( \disjp[(y,z)]\big)$.
$2$-connectivity by $\forall z \forall x\forall y\bigl(x\neq z \wedge y\neq z \rightarrow \conn_1(x,y,z)\bigr)$.
We can also express many deletion problems, such as the feedback
vertex set problem, and the elimination distance to bounded degree,
and more generally, elimination distance to any fixed first-order property.

We also point to the work of Mikołaj Boja\'nczyk~\cite{bojanczyk2021separator}, who independently introduced
\mbox{$\FOconn$} and proposed the name \emph{separator logic}.
He studied a variant of star-free expressions for graphs and showed that
these expressions exactly correspond to separator logic. We follow
his suggestion and thank Mikołaj for the discussion on
separator logic.

In \cref{sec:foconn} we study the expressive power of separator logic.
We give
examples on properties expressible with separator logic as well as
proofs that certain properties, such as planarity and in particular
the disjoint paths problem, are not expressible in separator logic.
%A natural parameter for separator formulas is the nesting depth
%of deletion operators, which we call \emph{deletion depth}.
We show that $(k+2)$-connectivity of a graph cannot be expressed with
only $\conn_k$ predicates and conclude that the restricted use of
these predicates
induces a natural hierarchy of expressiveness.
% Finally, we compare
% separator logic to transitive closure and fixed-point logics.

Using the notion of {\em block decompositions} together with known model-checking
results, one can show that model-checking for formulas using only $\conn_1$
predicates is fixed-parameter tractable on nowhere dense classes
of graphs. Hence, we can evaluate very simple connectivity queries in
formulas without an increase in the complexity of the model-checking
problem on subgraph-closed graph classes. On the other hand, when
we allow $\conn_2$ predicates, there are some simple graph classes that do not
exclude a topological minor, and on which model-checking becomes
$\mathrm{AW}[\star]$-hard.
In this paper,
we do not go into the details of model-checking, but in a companion
paper~\cite{pilipczuk2021algorithms}, we prove that in fact model-checking for
\mbox{$\FOconn$} %on subgraph closed classes
is fixed-parameter tractable on graph classes that exclude a topological
minor.

%We also point to work of Boja\'nczyk, who independently introduced
%\mbox{$\FOconn+\del$} and proposed the name separator logic.
%He studies a variant of star-free expressions for graphs and shows that
%these expressions exactly correspond to separator logic. He uses a
%slightly different syntax, which however leads to a logic with the
%same expressive power. Instead of separating connectivity and deletion
%of elements as we do, he uses an atomic predicate $\conn_k(x,y,z_1,\ldots, z_k)$
%that evaluates to true if (the valuations of) $x$ and $y$ are connected after
%the deletion of $z_1,\ldots, z_k$. It is easy to see that every formula
%of deletion depth~$k$ can be written as an FO formula with
%the $\conn_k$ predicate and vice versa.

The fact that planarity and the disjoint paths problem
cannot be expressed in separator logic motivates us to
define an even stronger logic that can express these
properties. The atomic predicate $\disjp_k[(x_1,y_1),\ldots,
(x_k, y_k)]$ evaluates to true if and only if there are internally
vertex-disjoint paths between (the valuations of)
$x_i$ and $y_i$ for all~$1\leq i\leq k$.
Connectivity of $x$ and $y$
can be tested by $\disjp_1[(x,y)]$.
% Also, the $\del$ operator
% can be simulated with the help of first-order quantifiers and
% $\disjp$, hence, the logic $\FO+\disjp$ strictly extends
% $\FOconn$.
More generally, the so obtained \emph{disjoint-paths logic}
$\FODP$ strictly extends separator logic.
With this more powerful logic, we can test if a graph contains a
fixed minor or topological minor, and in particular, test for
planarity. In combination with first-order quantification, we can
also express many $\Pi$-deletion problems such as the problem
of hitting all minors or topological minors from a given list $\Ff$.
On the other hand, we cannot express the odd cycle transversal
problem, as we cannot even express bipartiteness of a graph.
We study the expressive power of $\FODP$ in \cref{sec:fodp}.
Among other results, we prove that again an increase in the number
of disjoint paths in the predicates leads to an increase in
expressive power.

Note that while it would be desirable to be able to express
bipartiteness, which is equivalent to $2$-colorability,
it is not desirable to express general colorability problems, as
we aim for logics that are tractable on planar graphs and beyond,
while the $3$-colorability problem is NP-complete on planar graphs.
This example shows again that it is a delicate balance between
expressiveness and tractability and it will be a challenging
and highly interesting problem in future work to find the
right set of predicates to express even more algorithmic
graph properties while at the same time having tractable
model-checking. Until now the complexity of the model-checking
problem for $\FODP$ has remained elusive and will be a
very interesting problem in future work.

We conclude the paper in \cref{sec:other-logics} with a comparison between the
newly introduced logics and more established ones, like MSO and transitive-closure logics.

% !TEX root = main.tex

\section{Preliminaries}
\label{sec:preliminaries}

\textbf{Graphs.} In this paper we deal with finite and simple undirected
graphs.
%Most results will hold also for directed graphs or more generally,
%for binary relational structures, however, for simplicity we focus on
%graphs.
Let $G$ be a graph. We write $V(G)$
for the vertex set of~$G$ and~$E(G)$ for
its edge set. For a
set $X\subseteq V(G)$ we write $G[X]$ for the subgraph of $G$ induced by
$X$ and $G-X$ for the subgraph induced by $V(G)\setminus X$. For a
singleton set~$\{v\}$ we write~$G-v$ instead of $G-\{v\}$. A {\em path} $P$ in
$G$ is a subgraph on distinct vertices $v_1,\ldots, v_t$ with $\{v_i,v_{i+1}\}\in E(P)$ for all $1\leq i<t$ and a path~$P$ is said to {\em connect} its endpoints~$v_1$ and~$v_t$.
Two paths are {\em internally vertex-disjoint} if and only if every vertex that appears in both paths is an end point of both paths.
The graph~$G$ is {\em connected} if every two of its vertices are connected by a
path. It is {\em $k$-connected} if $G$ has more than $k$ vertices and $G-X$ is connected for every subset $X\subseteq V(G)$ of size strictly smaller than $k$. A {\em cycle}~$C$ in $G$ is a subgraph on
distinct vertices $v_1,\ldots, v_t$, $t\geq 3$, with $\{v_t,v_1\}\in E(C)$
and  $\{v_i,v_{i+1}\}\in E(C)$ for all $1\leq i<t$. An acyclic graph is a
{\em forest} and a connected acyclic graph is a {\em tree}.
%The graph $G$ is {\em Eulerian}
%if it is connected and all its vertices have even degree and it is
%{\em Hamiltonian} if it contains a cycle on all vertices as a subgraph. \sebi{Do we have the results on Eulerian and Hamiltonian graphs?}

A graph $H$ is a {\em minor} of $G$, denoted $H\minor G$,  if for all $v\in V(H)$
there are pairwise vertex-disjoint connected subgraphs $G_v$ of $G$
such that whenever $\{u,v\}\in E(H)$, then there are $x\in V(G_u)$
and $y\in V(G_v)$ with $\{x,y\}\in E(G)$. The graph $H$ is a {\em topological
minor} of~$G$, denoted $H\minor^{top}G$, if for all $v\in V(H)$ there
is a distinct vertex $x_v$ in $G$ and for all $\{u,v\}\in E(H)$ there
are internally vertex-disjoint paths $P_{uv}$ in $G$ with endpoints
$x_u$ and $x_v$. A graph is {\em planar} if and only if it does not contain $K_5$,
the complete graph on $5$ vertices, and $K_{3,3}$, the complete
bipartite graph with two partitions of size $3$, as a minor.

\medskip\noindent\textbf{Logic.} In this work we deal with structures
over purely relational {\em signatures}. A (purely relational) signature is a collection
of relation symbols, each with an associated arity. Let $\sigma$ be a
signature. A {\em $\sigma$-structure} $\mathfrak{A}$ consists of a non-empty
set $A$, the universe of~$\mathfrak{A}$, together with an interpretation of
each $k$-ary relation symbol $R\in\sigma$ as a $k$-ary relation
$R^\mathfrak{A}\subseteq A^k$. For a subset $X\subseteq A$ we write
$\strA[X]$ for the substructure induced by $X$. A {\em partial isomorphism}
between $\sigma$-structures $\strA$ and $\strB$ is an isomorphism
between $\strA[X]$ and $\strB[Y]$ for some subset $X\subseteq A$ of the
universe $A$ of $\strA$ and some subset $Y\subseteq B$ of the universe $B$ of
$\strB$.

%  $\overrightarrow{a} = (a_1,\ldots,a_n)$ and $\overrightarrow{b} = (b_1,\ldots,b_n)$ two tuples in $\mathfrak{A}$ and $\mathfrak{B}$ respectively. Then, $\left(\overrightarrow{a}, \overrightarrow{b}\right)$ defines a \textit{partial isomorphism} between $\mathfrak{A}$ and $\mathfrak{B}$ if the following conditions hold:
%\begin{enumerate}
%\item For every $i,j\leq n$, \[a_i=a_j\hspace{2ex}\text{ iff }\hspace{2ex}b_i=b_j\]
%\item For every constant symbol $c\in\sigma$, and every $i\leq n$,
%\[a_i=c^\mathfrak{A}\hspace{2ex}\text{ iff }\hspace{2ex}b_i=c^\mathfrak{B}\]
%\item For every $m$-ary relation symbol $R\in\sigma$ and every sequence $(i_1,\ldots,i_m)$ of (not necessarily distinct) numbers from $[1, n]$,
%\[(a_{i1},\ldots,a_{im})\in R^\mathfrak{A} \hspace{2ex}\text{ iff }\hspace{2ex} (b_{i1},\ldots,b_{im})\in R^\mathfrak{B}\]
%\end{enumerate}
%If $\sigma$ does not contain any constant, a partial isomorphism is the mapping $a_i\mapsto b_i,i\leq n$ that is an isomorphism between the substructures of $\mathfrak{A}$ and $\mathfrak{B}$ generated by $\{a_1,\ldots,a_n\}$ and $\{b_1,\ldots,b_n\}$.~\cite{libkin2013elements}

%\subsection{First-Order Logic}

We assume an infinite supply $\textsc{Var}$ of variables.
First-order formulas are built from the atomic formulas $x=y$, where
$x$ and $y$ are variables, and $R(x_1,\ldots, x_k)$, where \mbox{$R\in \sigma$}
is a $k$-ary relation symbol
and $x_1,\ldots, x_k$ are variables, by closing under the Boolean
connec\-tives~$\neg$,~$\wedge$~and~$\vee$, and by existential and
universal quantification~$\exists x$ and $\forall x$.
%
%Let $\sigma$ be a fixed vocabulary. \textit{Atomic formulae} are expressions of the form $x=y$ or $R(x_1,\ldots,x_n)$ for a $n$-ary relational symbol $R\in\sigma$ and variables $x,y,x_1,\ldots,x_n$. A \textit{term} is a variable, a constant symbol $c\in\sigma$ or a function symbol $f\in\sigma$ applied on other terms.
%
%\textit{First-order formulae} can be built from atomic formulae and are closed under boolean connectives ($\neg, \wedge, \vee$) and existential and universal quantifications ($\exists, \forall$):
%If $t_1,\ldots,t_n$ are terms, then $t_1=t_2$ and $R(t_1,\ldots,t_n)$ are formulae for a $n$-ary relation symbol $R\in\sigma$. If $\varphi$ and $\psi$ are formulae, then $\neg \varphi$, $\varphi\wedge\psi$ and $\varphi\vee\psi$ are formulae, as well as $\exists x_1,\ldots,x_n\varphi(x_1,\ldots,x_n)$ and $\forall x_1,\ldots,x_n\varphi(x_1,\ldots,x_n)$ for some variables $x_1,\ldots,x_n$.
%
A variable~$x$ not in the scope of a quantifier is a {\em free variable}. A formula without free variables is a {\em sentence}.
The {\em quantifier rank} $\mathrm{qr}(\varphi)$ of a formula $\varphi$ is the
maximum nesting depth of quantifiers in~$\phi$. We write $\FO_\sigma[q]$ for the set of all $\FO$ $\sigma$-formulas of quantifier rank at most~$q$,
or simply~$\FO[q]$ if $\sigma$ is clear from the context.
A formula without quantifiers is called {\em quantifier-free}.
%\begin{itemize}
%\item If $\varphi$ is atomic, then $\mathrm{qr}(\varphi)=0$.
%\item $\mathrm{qr}(\varphi_1\vee\varphi_2)=\mathrm{qr}(\varphi_1\wedge\varphi_2)=\max(\mathrm{qr}(\varphi_1), \mathrm{qr}(\varphi_2))$
%\item $\mathrm{qr}(\neg \varphi)=\mathrm{qr}(\varphi)$
%\item $\mathrm{qr}(\exists x \varphi)=\mathrm{qr}(\forall x\varphi)=\mathrm{qr}(\varphi)+1$
%\end{itemize}

If $\strA$ is a $\sigma$-structure
with universe $A$, then an {\em assignment} of the variables in~$\strA$
is a mapping $\bar a:\textsc{Var} \rightarrow A$. We use the standard
notation $(\strA, \bar a)\models \phi(\bar x)$ or $\strA \models \phi(\bar a)$
to indicate that $\phi$ is satisfied in $\strA$ when the free variables $\bar x$
of $\phi$ have been assigned by $\bar a$. We refer e.g.\ to the textbook~\cite{libkin2013elements}
for more background on first-order logic.

\section{Separator logic}
\label{sec:foconn}

In this section, we study the expressive power of separator logic
$\FOconn$. Formally, we assume that~$\sigma$ is
a signature that does not contain any of the relation symbols
$\conn_k$ for all~$k\geq 0$, and that it does contain a binary
relation symbol $E$, representing an edge relation. We
assume that $E$ is always
interpreted as an irreflexive and symmetric relation and
connectivity will always refer
to this relation. We let $\sigma+\conn\coloneqq \sigma\cup \{\conn_k : k\geq 0\}$, where each $\conn_k$ is a
$(k+2)$-ary relation symbol.

\begin{definition}\label{defn-foconn}
  The formulas of $(\FOconn)[\sigma]$ are the formulas of
  $\FO[\sigma+\conn]$. We usually simply write $\FOconn$,
  when $\sigma$ is understood from the context.

  For a $\sigma$-structure $\strA$, an assignment~$\bar a$
  and an $\FO+\conn$ formula $\phi(\bar x)$, we define the satisfaction relation $(\strA,\bar a)\models\phi(\bar x)$
  as for first-order logic, where an atomic predicate $\conn_k(x,y,
  z_1,\ldots, z_k)$ is evaluated as follows.
  Assume that the universe of $\strA$ is $A$ and let \mbox{$G=(A,E^\strA)$} be the
  graph on vertex set $A$ and edge set $E^\strA$. Then $(\strA, \bar a)$ models $\conn_k(x,y, z_1,\ldots, z_k)$ if and only if $\bar a(x)$ and $\bar a(y)$ are
  connected in $G-\{\bar a(z_1),\ldots, \bar a(z_k)\}$.
\end{definition}

 Note in particular
that if $\bar a(x)=\bar a(z_i)$ or $\bar a(y)=\bar a(z_i)$ for some $i\leq k$,
then \linebreak $(\strA,\bar a) \not\models \conn_k(x,y,z_1,\ldots, z_k)$.

\smallskip
We write $\FO+\conn_k$ for the fragment of $\FO+\conn$ that uses
only $\conn_\ell$ predicates for $\ell\leq k$. The quantifier rank of
an $\FO+\conn$ formula is defined as for plain first-order logic.
For structures $\strA$ with universe $A$ and $\bar a\in A^m$ and $\strB$ with
universe $B$ and $\bar b\in B^m$, we write $(\strA,\bar a)\equiv_{\conn}
(\strB, \bar b)$ if $(\strA,\bar a)$ and $(\strB,\bar b)$ satisfy the same
$\FO+\conn$ formulas, that is, for all $\phi(\bar x)$ we have $\strA\models
\phi(\bar a)\Leftrightarrow \strB\models \phi(\bar b)$. Similarly, we write
$(\strA,\bar a)\equiv_{\conn_k}(\strB, \bar b)$ and $(\strA,\bar a)\equiv_{\conn_{k,q}}
(\strB, \bar b)$ if $(\strA,\bar a)$ and $(\strB,\bar b)$ satisfy the same
$\FO+\conn_k$ formulas and the same $\FO+\conn_k$ formulas of quantifier
rank at most $q$, respectively.

\subsection{Expressive power of separator logic}

We now give examples of properties that are expressible with separator logic.

\begin{example}\label{exp:connectivity}
Connectivity is expressible in $\FO+\conn_0$ by the
formula \[\forall x\forall y\bigl(\conn_0(x,y)\bigr).\]

More generally,
for every non-negative integer $k$,
$(k+1)$-connectivity can be expressed by the formula
\[\forall x\forall y \forall  z_1\ldots \forall z_k \bigl(\bigwedge_{1\leq i\leq k} (x\neq z_i \wedge y\neq z_i) \rightarrow \conn_k(x,y,z_1,\ldots ,z_k)\bigr).\]
\end{example}

\pagebreak
\begin{example}
  We can express that there exists a cycle by
  \[\exists x\exists y \big(E(x,y)\wedge
  \exists z \big(\conn_1(z,x,y)\wedge
  \conn_1(z,y,x)\big)\big),\]
  hence, that a graph is acyclic by the negation of that formula. We write
  $\psi_{acyclic}$ for that formula.
  We can express that a graph is a tree by stating that it is connected and acyclic.
\end{example}

We can conveniently express deletion problems by relativizing formulas as follows.
For a formula $\phi$ that does not contain $z$ as a free variable
write $\del(z)[\phi]$ for the formula obtained from $\phi$ by
recursively replacing every subformula $\exists x \psi$ by $\exists x (x\neq z\wedge \psi)$, every subformula $\forall x \psi$ by $\forall x (x\neq z\rightarrow\psi)$
and every atomic formula $\conn_k(x,y,z_1,\ldots, z_k)$ by
$\conn_{k+1}(x,y,z_1,\ldots,z_k,z)$. Then
$(\strA,\bar a)\models \del(z)[\phi]$ if and only if
\mbox{$(\strA-\bar a(z), \bar a)\models \phi$}, where
$\strA-\bar a(z)$ denotes the substructure induced on the universe
of $\strA$ without $\bar a(z)$.

\begin{example}
We can state the existence of a feedback vertex set of size $k$ by
\[\exists z_1 \del(z_1)[\cdots [\exists z_k \del(z_k)[\psi_{acyclic}]\ldots].
\]

We can of course use the same principle to express any $\Pi$-deletion problem
that is $\FO+\conn$ expressible.
\end{example}

We can also, much more generally, express many elimination distance
problems.

\begin{example}
  The \emph{elimination distance} to a class $\Cc$ of graphs measures
  the number of recursive deletions of vertices needed for a graph $G$
  to become a member of $\Cc$. More precisely, a graph $G$ has
  elimination distance $0$ to $\Cc$ if $G \in \Cc$, and otherwise
  elimination distance at most~$k+1$ if in every connected component of $G$ we
  can delete a vertex such that the resulting graph has elimination
  distance at most $k$ to $\Cc$. Elimination distance was introduced by Bulian
  and Dawar~\cite{bulian2016graph} in their study of the parameterized
  complexity of the graph isomorphism problem and has
  recently obtained much attention in the
  literature, see e.g.~\cite{agrawal2021fpt,bulian2017parameterized,fomin2021parameterized,
  hols2019vertexcover,jansen2021vertex,LindermayrSV20}.

  Again, we define auxiliary notation. We write $\mathrm{comp}(x)$ for
  the connected component of (the valuation of) $x$. For a formula $\phi$ we write
  $\phi^{[\mathrm{comp}(x)]}$ for the formula obtained from $\phi$ by
  recursively replacing all subformulas $\exists y \psi$ by
  $\exists y (\conn_0(x,y) \wedge \psi)$ and all subformulas
  $\forall y \psi$ by
  $\forall y (\conn_0(x,y) \rightarrow \psi)$. Then
  $(\strA,\bar a)\models \phi^{[\mathrm{comp}(x)]}$ if and only if
  \mbox{$(\strA[\mathrm{comp}(\bar a(x))], \bar a)\models \phi$}, where
  $\strA[\mathrm{comp}(\bar a(x))]$ denotes the substructure induced on the
  connected component of $\bar a(x)$.

  Now assume $\Cc$ is a first-order definable class, say defined by a formula
  $\psi_\Cc$. Then elimination distance $0$ to $\Cc$ is defined by
  $\mathrm{ed}_0=\psi_\Cc$. If $\mathrm{ed}_k$ has been defined, then
  we can express elimination distance $k+1$ to $\Cc$ by the formula
  \[\mathrm{ed}_{k+1}\coloneqq \mathrm{ed}_k \vee \forall x\big(\exists y\ \del(y)[\mathrm{ed}_k]\big)^{[\mathrm{comp}(x)]}.\]
\end{example}

Our final example concerns the expressive power of separator
logic on finite words and finite trees. By the classical result of
B\"uchi, a language on words is regular if and only if it is
definable in $\MSO$. Here, words are
represented as finite structures over the vocabulary of the
successor relation and unary predicates representing the
letters of the alphabet. When considering first-order logic
on strings, it makes a big difference whether one considers
word structures over the successor relation or over its
transitive closure, the order relation. Languages definable by
$\FO$ over the order relation are exactly the star-free languages (see e.g.~\cite[Theorem~7.26]{libkin2013elements}),
while languages definable by $\FO$ over the successor
relation are exactly the locally threshold testable
languages~\cite[Theorem 4.8]{DBLP:reference/hfl/Thomas97}.
Similarly, MSO on trees can define exactly the tree regular languages
(defined via tree automata, see \cite[Theorem~7.30]{libkin2013elements}), while FO can only define a proper subclass
of the regular tree languages when the ancestor-descendant or even
only the parent-child relation is present.
This background was also the motivation of Boja\'nczyk, who studied a
variant of star-free expressions for graphs and showed that
these expressions exactly correspond to separator logic~\cite{bojanczyk2021separator}. In our example,
we show that separator logic on rooted trees has exactly the same
expressive power as first-order logic in the presence of the
ancestor-descendant relation. Let us write $\FO[<]$ for the latter
logic. On the other hand, we treat a rooted tree as a
graph-theoretic tree with an additional unary predicate marking the
root. In the degenerate case, we treat a word as a path, where one
of the endpoints is marked by a unary predicate as the smallest
vertex (the beginning of the word).

\begin{example}
On rooted trees (and similarly on words) \mbox{$\FO+\conn$} collapses to
\mbox{$\FO+\conn_1$} and has exactly the same expressive power as
$\FO[<]$ over trees with the ancestor-descendant relation. We
show first that $\conn_k(x,y,z_1,\ldots,z_k)$ can be expressed
in $\FO[<]$. For this, we need to ensure that $x$ and $y$ are not
equal to any $z_i$ and that no $z_i$ lies
on the unique path between $x$ and $y$ in the tree. We can
define the vertices on the unique path between $x$ and~$y$ by
first defining the least common ancestor of $x$ and $y$ by
the formula\linebreak $\mathrm{lca}(x,y,z)=z\leq x \wedge z\leq y \wedge
\neg \exists z'(z<z' \wedge z'\leq x \wedge z'\leq y)$. If $z$ is
the least common ancestor of $x$ and $y$, it remains
to state that none of the $z_i$ lies either between $x$ and $z$
or between $y$ and $z$, which is done by the formula
$\exists z\big(\mathrm{lca}(x,y,z) \wedge \bigwedge_{1\leq i\leq k}
\neg (z \leq z_i\leq x \vee z \leq z_i\leq y)\big)$.

Conversely, we show that we can define with $\FO+\conn_1$
the ancestor-descendant relation in rooted trees. Assume the root
is marked by the unary symbol $R$. Then
$x<y$ is equivalent to $\exists r \big(R(r) \wedge \conn_1(x,r,y) \wedge
\neg \conn_1(y,r,x)\big)$.
\end{example}

\subsection{The limits of separator logic}

We now study the limits of separator logic and show that
planarity
cannot be expressed in $\FO+\conn$.
Slightly abusing notation let us also write $\FOconn_k$ for the properties that
are expressible in $\FOconn_k$.
We show that there is a strict hierarchy of expressiveness:
$\FO+\conn_0\subsetneq \FO+\conn_1\subsetneq \FO+\conn_2 \subsetneq \ldots$
These results are based on an adaptation of the standard
Ehrenfeucht-Fra\"iss\'e game (EF game), which is commonly used in the study
of the expressive power of first-order logic.

\medskip\noindent \textbf{Ehrenfeucht-Fra\"iss\'e Games.}
The Ehrenfeucht-Fra\"iss\'e game is played by two players called
\emph{Spoiler} and \emph{Duplicator}. Given two structures $\strA$ and
$\strB$, Spoiler's aim is to show that the structures can be
distinguished by first-order logic (with formulas of a given quantifier rank),
while Duplicator wants to prove the opposite. The $q$-round EF game proceeds in $q$~rounds,
where each round consists of the following two steps.

\vspace{-2.5mm}
\begin{enumerate}
\item Spoiler picks an element $a\in\mathfrak{A}$ or an element $b\in\mathfrak{B}$.
\item Duplicator responds by picking an element of the other structure,
that is, she picks a $b\in\mathfrak{B}$ if Spoiler chose $a\in \strA$, and
she picks an $a\in\mathfrak{A}$ if Spoiler chose $b\in \strB$.
\end{enumerate}
\vspace{-2mm}

After $q$ rounds, the game stops. Assume the players have chosen
$\bar a=a_1,\ldots, a_q$ and $\bar b=b_1,\ldots, b_q$. Then Duplicator {\em wins} if the
mapping $a_i\mapsto b_i$ for all $1\leq i\leq q$ is a
partial isomorphism of $\strA$ and $\strB$.
We write for short $\bar a\mapsto\bar b$ for this mapping. Otherwise,
Spoiler wins. We say that Duplicator {\em wins the $q$-round
EF game} on $\strA$ and $\strB$ if she can force a win
no matter how Spoiler plays. We then write $\strA\simeq_q \strB$.

\begin{theorem}[Ehrenfeucht-Fra\"iss\'e, see e.g.~{\cite[Theorem 3.18]{libkin2013elements}}]
  \label{t:ef-theorem}
  Let $\mathfrak{A}$ and $\mathfrak{B}$ be two $\sigma$-structures where $\sigma$ is purely relational. Then
  $\strA\equiv_q\strB$ if and only if $\strA\simeq_q \strB$.
\end{theorem}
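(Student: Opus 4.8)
The plan is to prove this standard back-and-forth characterization by induction on the number of rounds, after first strengthening the statement to handle game positions in which some pebbles are already placed. Concretely, for tuples $\bar a \in A^m$ and $\bar b \in B^m$ I would write $(\strA, \bar a) \equiv_q (\strB, \bar b)$ to mean that $(\strA, \bar a)$ and $(\strB, \bar b)$ satisfy the same $\FO$ formulas $\varphi(x_1, \ldots, x_m)$ of quantifier rank at most $q$, and $(\strA, \bar a) \simeq_q (\strB, \bar b)$ to mean that Duplicator wins the $q$-round game played from the starting position matching $a_i$ to $b_i$. The theorem is then the case $m = 0$ of the biconditional $(\strA, \bar a) \equiv_q (\strB, \bar b) \Leftrightarrow (\strA, \bar a) \simeq_q (\strB, \bar b)$, which I would establish by induction on $q$.

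For the base case $q = 0$, both sides assert exactly that $\bar a \mapsto \bar b$ is a partial isomorphism. Duplicator wins the $0$-round game precisely when this map is a partial isomorphism, by definition; and since over a relational signature every quantifier-free formula is a Boolean combination of atomic formulas, $(\strA, \bar a)$ and $(\strB, \bar b)$ agree on all rank-$0$ formulas precisely when they agree on all atoms, which is again the partial-isomorphism condition.

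For the inductive step, assume the claim for $q$ and consider $q+1$. The direction $\simeq_{q+1} \Rightarrow \equiv_{q+1}$ is the easy one: the formulas on which the two structures agree are closed under Boolean connectives, so it suffices to treat formulas of the form $\exists y\, \psi(\bar x, y)$ with $\psi$ of rank at most $q$; if $(\strA, \bar a) \models \exists y\, \psi$, I let Spoiler play a witness $a$ in $\strA$, let Duplicator answer with some $b$ from her winning strategy, and apply the induction hypothesis to the extended position $(\bar a a, \bar b b)$ to transport $\psi$ to $\strB$, with the symmetric argument for a Spoiler move in $\strB$. The interesting direction $\equiv_{q+1} \Rightarrow \simeq_{q+1}$ is where the real work lies and is the main obstacle. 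Here, for an arbitrary Spoiler move $a \in A$, I must produce a Duplicator answer $b \in B$ with $(\strA, \bar a a) \equiv_q (\strB, \bar b b)$, so that the induction hypothesis supplies a winning continuation. The key device is the finiteness, up to logical equivalence, of rank-$q$ formulas in the variables $x_1, \ldots, x_m, y$ over the finite relational signature $\sigma$; this lets me form a single rank-$q$ formula $\tau(\bar x, y)$, namely the conjunction of all such formulas satisfied by $(\strA, \bar a a)$ together with the negations of those that fail, which completely pins down the rank-$q$ type of $a$ over $(\strA, \bar a)$. Since $(\strA, \bar a) \models \exists y\, \tau$ and $\exists y\, \tau$ has quantifier rank at most $q+1$, the hypothesis $\equiv_{q+1}$ yields a witness $b$ in $\strB$ with $(\strB, \bar b b) \models \tau$, and by construction $\tau$ forces $(\strA, \bar a a) \equiv_q (\strB, \bar b b)$; Duplicator plays this $b$, and the symmetric case of a Spoiler move in $\strB$ is handled identically.

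The one genuine subtlety to flag is precisely this finiteness step: it is here that relationality and finiteness of $\sigma$ are used, since without finitely many rank-$q$ formulas up to equivalence one could not compress the entire type of $a$ into the single formula $\tau$ and feed it through the $\equiv_{q+1}$ hypothesis. With both inductions in hand, instantiating at $m = 0$ gives $\strA \equiv_q \strB \Leftrightarrow \strA \simeq_q \strB$, as required.
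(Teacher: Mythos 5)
Your proof is correct and is precisely the standard back-and-forth argument (via rank-$q$ Hintikka formulas) that the paper delegates to the cited reference~\cite[Theorem 3.18]{libkin2013elements} rather than proving itself. Your flag about needing finitely many rank-$q$ formulas up to equivalence --- hence a finite relational signature --- is the right subtlety to note and matches the hypotheses of the cited result.
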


The EF game for $\FO$ naturally extends to separator logic. The \emph{($\conn_{k,q}$)-game} is played just as the $q$-round EF game, but the winning condition
is changed as follows. If in $q$ rounds the players have chosen
$\bar a=a_1,\ldots, a_q$ and $\bar b=b_1,\ldots, b_q$, then Duplicator
wins if

\vspace{-2.5mm}
\begin{enumerate}
  \item the mapping $\bar a\mapsto \bar b$ is a partial isomorphism of
  $\strA$ and $\strB$, and
  \item for every $\ell\le k$ and every sequence $(i_1,\ldots,i_{\ell+2})$ of numbers
  in $\{1,\ldots,q\}$ we have
  \[\strA\models \conn_{\ell}(a_{i_1},\ldots,a_{i_{\ell+2}}) \quad \Longleftrightarrow\quad \strB\models \conn_{\ell}(b_{i_1},\ldots,b_{i_{\ell+2}}).\]
\end{enumerate}

Otherwise,
Spoiler wins. We say that Duplicator wins the ($\conn_{k,q}$)-game on
$\strA$ and $\strB$ if she can force a win
no matter how Spoiler plays. We then write $\strA\simeq_{\conn_{k,q}} \strB$.

By following the lines of the proof of the classical Ehrenfeucht-Fra\"iss\'e Theorem
we can prove the following theorem.

\begin{theorem}
  \label{t:ef-conn-theorem}
  Let $\mathfrak{A}$ and $\mathfrak{B}$ be two $\sigma$-structures where $\sigma$ is purely rational (and contains a binary relation symbol $E$ that is
  interpreted on both structures as an irreflexive and symmetric relation). Then
  $\strA\equiv_{\conn_{k,q}}\strB$ if and only if $\strA\simeq_{\conn_{k,q}} \strB$.
\end{theorem}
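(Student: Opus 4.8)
The plan is to adapt the classical Ehrenfeucht–Fraïssé back-and-forth argument, modifying both the "$\equiv$ implies $\simeq$" and the "$\simeq$ implies $\equiv$" directions to account for the new $\conn_\ell$ atomic predicates. The key conceptual point is that the $\conn_\ell$ predicates behave, in the winning condition and in the logic, exactly like additional atomic relations on the already-chosen elements: clause~(2) of the game is precisely the requirement that the chosen tuples agree on all $\conn_\ell$-atoms for $\ell \le k$, just as a partial isomorphism requires agreement on all $\sigma$-atoms. So the entire proof structure of Theorem~\ref{t:ef-theorem} carries over, provided we treat $\conn_\ell$-atoms on the same footing as ordinary atoms at the base of the induction.

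First I would set up the standard rank-indexed formulation. For the direction $\strA \simeq_{\conn_{k,q}} \strB \Rightarrow \strA \equiv_{\conn_{k,q}} \strB$, I would argue by induction on $q$ (the number of rounds, equal to the quantifier rank) that if Duplicator wins the $q$-round $(\conn_{k,q})$-game from position $(\bar a, \bar b)$, then $(\strA,\bar a)$ and $(\strB,\bar b)$ satisfy the same $\FO+\conn_k$ formulas of quantifier rank at most $q$. The base case $q=0$ is where the new predicates enter: a winning position after $0$ further moves means $\bar a \mapsto \bar b$ is a partial isomorphism \emph{and} the $\conn_\ell$-atoms agree for all $\ell \le k$; since a quantifier-free $\FO+\conn_k$ formula is a Boolean combination of $\sigma$-atoms and $\conn_\ell$-atoms ($\ell \le k$) over the free variables, agreement on all such atoms gives agreement on all quantifier-free formulas. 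The inductive step is the usual one: a rank-$(q{+}1)$ formula is a Boolean combination of formulas $\exists x\,\psi$ with $\psi$ of rank $q$; Spoiler's move exhibiting a witness $a \in \strA$ is answered by Duplicator's response $b \in \strB$, and the induction hypothesis applied to the extended position $(\bar a a, \bar b b)$ finishes the case (and symmetrically for $\strB$).

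For the converse, $\strA \equiv_{\conn_{k,q}} \strB \Rightarrow \strA \simeq_{\conn_{k,q}} \strB$, I would use the standard Hintikka-formula device, adapted to the enlarged atomic type. For each position $(\strA, \bar a)$ one defines by induction on the remaining rounds a characteristic formula $\phi^q_{\strA,\bar a} \in \FO+\conn_k$ of quantifier rank $q$, where the rank-$0$ formula is the conjunction of all $\sigma$-atoms \emph{and} all $\conn_\ell$-atoms ($\ell \le k$, arity $\le k+2$) together with their negations, exactly as they hold of $\bar a$ — this is a finite conjunction because the arity is bounded by $k+2$ and the number of positions is bounded. One shows $\strB \models \phi^q_{\strA,\bar a}(\bar b)$ holds iff Duplicator wins the $q$-round game from $(\bar a,\bar b)$, and that there are, up to logical equivalence, only finitely many such formulas at each rank, so Duplicator's winning strategy is read off from the requirement that characteristic formulas are preserved.

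\textbf{The main obstacle} I anticipate is purely bookkeeping rather than conceptual: one must verify that the set of $\conn_\ell$-atoms relevant at a given game position is \emph{finite} — which it is, since $\ell$ ranges over $\{0,\ldots,k\}$, each $\conn_\ell$ has fixed arity $k+2$, and the variables are drawn from the finitely many pebbled elements — so that the rank-$0$ characteristic formula is a genuine $\FO+\conn_k$ formula and the usual finiteness argument guaranteeing finitely many inequivalent Hintikka formulas still goes through. I would also take care that clause~(2) of the winning condition quantifies over \emph{all} sequences $(i_1,\ldots,i_{\ell+2})$ of indices, including repetitions, so that the correspondence with agreement on $\conn_\ell$-atoms (whose variables may coincide or equal already-pebbled elements) is exact; the remark following Definition~\ref{defn-foconn} about the convention when an endpoint coincides with a deleted vertex is what makes this atom well-defined on repeated indices. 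Beyond these points the argument is, as the authors indicate, a line-by-line transcription of the classical proof, so I would not reproduce every routine computation.
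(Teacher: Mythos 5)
Your proposal is correct and matches the paper's approach: the paper gives no detailed proof, stating only that the theorem follows "by following the lines of the proof of the classical Ehrenfeucht--Fra\"iss\'e Theorem," which is precisely the adaptation you carry out (treating the $\conn_\ell$ atoms as extra atomic relations in the base case and in the Hintikka formulas, with the finiteness bookkeeping you note). The only slip is that $\conn_\ell$ has arity $\ell+2$ rather than $k+2$, which does not affect the finiteness argument.
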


The next theorem exemplifies the use of the ($\conn_{k,q}$)-game.

\begin{theorem}\label{thm-conn-planar}
  Planarity is not expressible in $\FOconn$.
\end{theorem}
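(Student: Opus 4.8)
The plan is to use the Ehrenfeucht--Fra\"iss\'e characterization of \cref{t:ef-conn-theorem}. If planarity were expressible in $\FOconn$, it would be expressed by a single sentence, which uses $\conn_\ell$ predicates only for $\ell\le k$ and has some quantifier rank $q$, for fixed $k$ and $q$. Hence it suffices to construct, for every $k$ and $q$, a planar graph $\strA$ and a non-planar graph $\strB$ with $\strA\simeq_{\conn_{k,q}}\strB$: then $\strA\equiv_{\conn_{k,q}}\strB$ by \cref{t:ef-conn-theorem}, so no $\FOconn_k$ sentence of quantifier rank $q$ separates them, contradicting the assumed expressibility of planarity.

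For the construction I would take the two standard \emph{cyclic ladders}. Let $n$ be large and start from two graphs on $2n$ branch vertices: the \emph{circular ladder} $R_n$, consisting of two disjoint $n$-cycles joined by a perfect matching of \emph{rungs}, and the \emph{M\"obius ladder} $L_n$, a single $2n$-cycle $c_0,\ldots,c_{2n-1}$ together with the $n$ ``long diagonals'' $\{c_i,c_{i+n}\}$. The graph $R_n$ is planar, while $L_n$ is non-planar for $n\ge 3$ (being non-planar, it contains $K_5$ or $K_{3,3}$ as a topological minor). To give Duplicator room I then subdivide every edge of both graphs $N$ times, with $N$ and $n$ chosen large relative to $k$ and $q$, and let $\strA$ and $\strB$ be the resulting subdivisions; subdivision preserves both planarity of $R_n$ and non-planarity of $L_n$. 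The crucial point is that $R_n$ and $L_n$ are \emph{locally identical}: under the correspondence sending the $i$-th top vertex of $R_n$ to $c_i$ and the $i$-th bottom vertex to $c_{i+n}$, the two graphs become isomorphic except for a single \emph{twist}, that is, they differ only in how one chosen cyclic gap glues the top rail to the bottom rail (straight versus flipped). Both graphs are rotationally symmetric, so this twist may be placed in any cyclic position.

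Duplicator's strategy is to maintain a correspondence between the pebbled neighborhoods that is a local isomorphism everywhere, except that it flips top and bottom when crossing one pebble-free \emph{gap} of the cycle. Since there are at most $q$ pebbles on a cycle of length $\gg q$, a gap of length $\gg k$ always exists, and whenever Spoiler threatens to pebble inside the current gap, Duplicator first relocates the twist to another long gap using the rotational symmetry; this re-routing only alters the (virtual) correspondence on currently empty arcs and so does not disturb the pebbles already placed. The heart of the argument is then to verify condition~(2) of the $(\conn_{k,q})$-game: for every $\ell\le k$ and every choice of pebbles serving as a separator, two further pebbles $x,y$ are connected in $\strA$ after the deletion if and only if they are connected in $\strB$. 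Here one uses that, outside the gap, $\strA$ and $\strB$ with the separator removed are isomorphic via the correspondence, while the gap region is left entirely undeleted and is internally connected in \emph{both} graphs. Thus the gap acts in both graphs as one and the same connected ``patch'' attached at corresponding boundary vertices, so the twist can never change whether a path between two pebbles exists; even if the separator isolates the whole gap, the remaining parts of $\strA$ and $\strB$ form the same untwisted linear ladder and connectivity is again preserved.

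The main obstacle I expect is precisely this verification: one must argue carefully that no separator of at most $k$ pebbled vertices can ever expose the twist. Concretely, this means showing that the at most $k$ deletions either leave the active gap available as an identical connector or cut it off entirely together with its twist, and that the flip-relocation can always be performed while keeping every pebble at distance more than $k$ from the active gap. Choosing the subdivision length $N$ and the cycle length $n$ sufficiently large as functions of $k$ and $q$ is what guarantees both; quantifying these bounds and formalizing the invariant maintained through the relocation is the technical core of the proof.
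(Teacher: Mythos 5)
Your construction is essentially the one in the paper: the proof of \cref{thm-conn-planar} also plays the circular ladder against the M\"obius ladder (the graphs $G_q$ and $H_q$ of \cref{fig:planar2}, with $n=2^{q+1}$ rungs and no subdivision --- subdividing buys you nothing here, since all the room Duplicator needs comes from the number of rungs). There is, however, one concrete misstep in your strategy description: Duplicator cannot ``relocate the twist to another long gap''. Changing the correspondence on an arc that contains already-placed pebbles flips the column of those pebbles, so their images would no longer match and the partial isomorphism would be destroyed; the seam is therefore confined, for the rest of the game, to the pebble-free gap it currently occupies. The correct maintenance is the standard distance-halving bookkeeping, which is exactly the paper's invariant: corresponding pebbles sit in the same row, and their column-distances agree whenever their row-distance is at most $2^{q-i}$, so that after $q$ rounds the seam still lives in a nonempty gap. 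Concretely, when Spoiler pebbles inside the seam gap, Duplicator answers so that the seam survives in the longer of the two resulting sub-gaps; this is why $n$ must be exponential in $q$, not merely ``large relative to $k$ and $q$''. With that repair your proof goes through, and your verification of the connectivity condition --- the seam gap is a connected, undeleted patch attached at corresponding boundary vertices, so contracting it yields isomorphic quotients and the twist is invisible to every $\conn_\ell$ query --- is arguably cleaner than the paper's argument, which instead checks that the deleted pebbles never block an entire row or diagonal and then reroutes each witnessing path by hand.
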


\begin{figure}[ht]
  \centering
  \begin{subfigure}[b]{0.45\linewidth}
    \centering
    \begin{tikzpicture}
      \node (v11) at (0,0) [circle,draw] {$v_{1,1}$};
      \node (v31) at (2,0) [circle,draw] {$v_{2,1}$};
      \node (v12) at (0,-2) [circle,draw] {$v_{1,2}$};
      \node (v32) at (2,-2) [circle,draw] {$v_{2,2}$};
      \node (v13) at (0,-5) [circle,draw] {$v_{1,n}$};
      \node (v33) at (2,-5) [circle,draw] {$v_{2,n}$};

      \node (c0) at (0,.6) [blue] {$g_{-3}$};
      \node (c1) at (2,.6) [blue] {$g_{-2}$};
      \node (c2) at (0,-5.65) [blue] {$g_{-1}$};
      \node (c3) at (2,-5.65) [blue] {$g_0$};

      \path
      (v11) edge (v31)
      (v12) edge (v32)
      (v13) edge (v33)
      (v11) edge (v12)
      (v31) edge (v32)
      (v11) edge [bend right=40] (v13)
      (v31) edge [bend left=40] (v33);

      \path[dotted] (v12) edge (v13);
      \path[dotted] (v32) edge (v33);
    \end{tikzpicture}
    \subcaption{$G_q$}
  \end{subfigure}
  \begin{subfigure}[b]{0.45\linewidth}
    \centering
    \begin{tikzpicture}
      \node (v11) at (0,0) [circle,draw] {$v'_{1,1}$};
      \node (v31) at (2,0) [circle,draw] {$v'_{2,1}$};
      \node (v12) at (0,-2) [circle,draw] {$v'_{1,2}$};
      \node (v32) at (2,-2) [circle,draw] {$v'_{2,2}$};
      \node (v13) at (0,-5) [circle,draw] {$v'_{1,n}$};
      \node (v33) at (2,-5) [circle,draw] {$v'_{2,n}$};

      \node (c0) at (0,.65) [blue] {$h_{-3}$};
      \node (c1) at (2,.65) [blue] {$h_{-2}$};
      \node (c2) at (2,-5.7) [blue] {$h_{-1}$};
      \node (c3) at (0,-5.7) [blue] {$h_0$};

      \path
      (v11) edge (v31)
      (v12) edge (v32)
      (v13) edge (v33)
      (v11) edge (v12)
      (v31) edge (v32);

      \path[dotted] (v12) edge (v13);
      \path[dotted] (v32) edge (v33);

      \draw (v11) to [out=40,in=110] ($(v31)+(1,.1)$) to [out=-70,in=60] (v33);
      \draw (v31) to [out=140,in=70] ($(v11)+(-1,.1)$) to [out=-110,in=120] (v13);
    \end{tikzpicture}
    \subcaption{$H_q$}
  \end{subfigure}
  \caption{Planarity is not expressible in $\FOconn$}
  \label{fig:planar2}
\end{figure}

\begin{proof}
Assume planarity is expressible by a sentence $\phi$ of $\FO+\conn_k$ of
quantifier rank $q$. Without loss of generality, we may assume that
$k\leq q$, as otherwise, we have repetitions in the $\conn_k$
predicates that can be avoided by using $\conn_\ell$ predicates for
$\ell<k$.
  Let~$G_q$ and~$H_q$ be defined as shown in \cref{fig:planar2},
  where $n=2^{q+1}$. Then, $G_q$ is planar but $H_q$ embeds only
  in a surface of genus one (into the Möbius strip, which cannot be embedded into the plane).
%  To see this more clearly, looks at the following branch sets: $A_1 := \{v'_{1,1}\}$, $A_2 := \{v'_{2,2},\ldots,v'_{2,n-1}\}$, $A_3 := \{v'_{1,n}\}$, and
%  $B_1 := \{v'_{2,1}\}$, $B_2 := \{v'_{1,2},\ldots,v'_{1,n-1}\}$, $B_3 := \{v'_{2,n}\}$. Then for every $i,j$, there is a connection between $A_i$ and $B_j$.
We show that $G_q\simeq_{\conn_{k,q}} H_q$, contradicting the
assumption that $\phi$ must distinguish~$G_q$ and~$H_q$.
In fact, we prove an even stronger statement by giving Spoiler four
free moves $g_{-3}=v_{1,1}$, $g_{-2}=v_{2,1}$, $g_{-1}=v_{1,n}$
and $g_0=v_{2,n}$ in $G_q$ and forcing
Duplicator to respond with the vertices $h_{-3}=v'_{1,1}$, $h_{-2}=v'_{2,1}$, $h_{-1}=v'_{2,n}$ and $h_0=v'_{1,n}$ in $H_q$.
Note the twist in the last two vertices.
These extra moves are helpful to define Duplicator's winning strategy.

We define the $x$-distance of two nodes $v_{i,j}$ and $v_{k,\ell}$ as $d_x(v_{i,j},v_{k,\ell})=|i-k|$ and the $y$-distance as $d_y(v_{i,j},v_{k,\ell})=|j-\ell|$. Note that the $y$-distance is not the distance in the graphs, e.g.\ $d_y(g_{-3},g_{-1})=2^{q+1}-1$,
even though $g_{-3}$ and $g_{-1}$ are adjacent in~$G_q$.

Assume now that the first $i$ moves have been made in the game and
the players have selected the vertices $\bar g=(g_{-3},\ldots,g_0,g_1,\ldots,g_i)$
in $G_q$
(where $g_1,\ldots, g_i$ were freely chosen by the players), and $\bar h =(h_{-3},\ldots,h_0,h_1,\ldots,h_i)$ in $H_q$ (where $h_1,\ldots,h_i$ were freely
chosen by the players).
We prove by induction that Duplicator can play in such a way that after
round $i$ of the $(\conn_{k,q}$)-game the following conditions hold for all $-3\leq j,\ell\leq i$:

\vspace{-2.5mm}
  \begin{enumerate}
    \item if $g_j=v_{x,y}$, then $h_j=v'_{x',y}$, that is, corresponding
    pebbles are in the same row, and in particular $d_y(g_j,g_\ell)=d_y(h_j,h_\ell)$, and
    \item if $d_y(g_j,g_\ell)\leq 2^{q-i}$, then $d_x(g_j,g_\ell)=d_x(h_j,h_\ell)$.
  \end{enumerate}
  \vspace{-1mm}

These conditions together with the first four extra moves imply that the mapping $\bar g\mapsto \bar h$ is a partial
isomorphism of $G_q$ and $H_q$. Let us show that also
for every $0\leq \ell\le k$ and every sequence $(i_1,\ldots,i_{\ell+2})$ of numbers
in $\{-3,\ldots,i\}$ we have
$G_q\models \conn_{\ell}(g_{i_1},\ldots,g_{i_{\ell+2}})$ if and only if $H_q\models \conn_{\ell}(h_{i_1},\ldots,h_{i_{\ell+2}})$. Assume $G_q\models \conn_{\ell}(g_{i_1},\ldots,g_{i_{\ell+2}})$,
that is, $g_{i_1}$ and $g_{i_2}$ are connected after the deletion of
$g_{i_3},\ldots, g_{i_{\ell+2}}$, say by a path $P=v_{x_1,y_1}\ldots
v_{x_m,y_m}$, where $v_{x_1,y_1}=g_{i_1}$ and $v_{x_m,y_m}=g_{i_2}$.
Then there are no $g_{i_{j_1}}=v_{x,y}$ and $g_{i_{j_2}}=v_{x',y'}$
(for $j_1,j_2\geq 3$) with $y=y'=y_i$ and $x\neq x'$
for some $2\leq i\leq m-1$
(this would block a row along which the path goes,
which is not possible) and no
$g_{i_{j_1}}=v_{x,y}$ and $g_{i_{j_2}}=v_{x',y'}$
(for $j_1,j_2\geq 3$) with $y_i=y=y'-1=y_{i+1}-1$ and $x\neq x'$
for some $2\leq i\leq m-1$
(this would block a ``diagonal'' of which the path contains at least
one vertex, which
is not possible). By the first condition of the invariant
there are no $h_{i_{j_1}}=v_{x,y}$ and $h_{i_{j_2}}=v_{x',y'}$
(for $j_1,j_2\geq 3$) with $y=y'=y_i$ and $x\neq x'$
for some $2\leq i\leq m-1$ and by the second condition of
the invariant there are no
$h_{i_{j_1}}=v_{x,y}$ and $h_{i_{j_2}}=v_{x',y'}$
(for $j_1,j_2\geq 3$) with $y_i=y=y'-1=y_{i+1}-1$ and $x\neq x'$
for some $2\leq i\leq m-1$. Now, if $P'=v'_{x_1,y_1}\ldots
v'_{x_m,y_m}$ is not a path from $h_{i_1}$ to $h_{i_2}$ after
the deletion of $h_{i_3},\ldots, g_{i_{\ell+2}}$, it is possible to
reroute the path by switching the row appropriately, as the
$h_{i_j}$ never block a complete row or a diagonal, as shown
above. The case $H_q\models \conn_{\ell}(h_{i_1},\ldots,h_{i_{\ell+2}})$
is symmetrical.

  We now show that Duplicator can maintain this invariant throughout the game.
  For the initial configuration $i=0$, the  conditions are obviously
  fulfilled for $-3\leq j,\ell\leq 0$. Corresponding pebbles are in the
  same row and note that $d_y(g_{j},g_{\ell})=2^{q+1}-1$, for\linebreak $j\in \{-3,-2\}$
  and $\ell\in \{-1,0\}$ and analogously for $h_j$ and $h_\ell$.

  For the induction step, suppose that the conditions are fulfilled so far and that Spoiler is making his $(i+1)$-move in $G_q$ (the case of $H_q$ is symmetrical). We may assume that Spoiler does not choose a
  vertex that was chosen before, say Spoiler picks $g_{i+1}=v_{\_,a}$.
Duplicator must choose $h_{i+1}=v'_{\_,a}$ with the same
$y$-coordinate. We have to make sure that she can choose
the vertex with that $y$-coordinate satisfying the second condition.
Let $g_j=v_{\_,b}$ and $g_\ell=v_{\_,c}$ with $-3\leq j,\ell\leq i$ be
such that $b\leq a\leq c$ and there is no other $g_k=v_{\_,d}$ with
$b<d<c$. Intuitively, $g_j$ is the lowest pebble that was placed above
(or in the same row \mbox{as)~$g_{i+1}$}, while $g_k$ is the highest pebble that was placed below (or in the same row as) $g_{i+1}$.
  There are two cases:
  \begin{enumerate}
    \item $d_y(g_j,g_\ell)\leq 2^{q-i}$: Then by hypothesis, $d_x(h_j,h_\ell)=d_x(g_j,g_\ell)$ and $d_y(h_j,h_\ell)=d_x(g_j,g_\ell)$. Here, Duplicator chooses the unique
    $h_{i+1}=v'_{\_,a}$ such that $d_x(h_j,h_{i+1})=d_x(g_j,g_{i+1})$, and we have $d_x(h_\ell,h_{i+1})=d_x(g_\ell,g_{i+1})$.
    \item $d_y(g_j,g_\ell)> 2^{q-i}$: Then $d_y(h_j,h_\ell)> 2^{q-i}$ and there are three possibilities:
    \begin{itemize}
      \item $d_y(g_j,g_{i+1})\leq 2^{q-(i+1)}$: Then $d_y(g_\ell,g_{i+1})> 2^{q-(i+1)}$, and Duplicator chooses\linebreak $h_{i+1}=v'_{\_,a}$ such that $d_x(h_j,h_{i+1})=d_x(g_j,g_{i+1})$.
      Hence, $d_y(h_\ell,h_{i+1})> 2^{q-(i+1)}$.
      \item $d_y(g_\ell,g_{i+1})\leq 2^{q-(i+1)}$: Then $d_y(g_j,g_{i+1})> 2^{q-(i+1)}$. Similarly to the previous case, Duplicator chooses $h_{i+1}=v'_{\_,a}$ such that $d_x(h_\ell,h_{i+1})=d_x(g_\ell,g_{i+1})$. Consequently, $d_y(h_j,h_{i+1})> 2^{q-(i+1)}$.
      \item $d_y(g_j,g_{i+1})> 2^{q-(i+1)}$ and $d_y(g_\ell,g_{i+1})> 2^{q-(i+1)}$: Here, Duplicator can choose $h_{i+1}=v'_{1,a}$ or $h_{i+1}=v'_{2,a}$ as she wants. We get that $d_y(h_j,h_{i+1})\geq 2^{q-(i+1)}$ and $d_y(h_\ell,h_{i+1})\geq 2^{q-(i+1)}$.
    \end{itemize}
  \end{enumerate}

  Thus, in all cases, the conditions are fulfilled and Duplicator wins the ($\conn_{k,q}$)-game on $G_q$ and $H_q$. Hence, planarity is not definable in $\FOconn$.
\end{proof}

As a graph is planar if and only if it excludes $K_5$ and $K_{3,3}$ as
(topological) minors and we will show that this can be expressed using disjoint paths
predicates, we conclude that the disjoint paths predicate cannot be
expressed with $\FO+\conn$.

\begin{corollary}\label{crl:disjp-no-foconn}
The disjoint paths problem cannot be expressed in $\FO+\conn$.
\end{corollary}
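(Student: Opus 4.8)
The plan is to argue by contradiction through a reduction from planarity, invoking \cref{thm-conn-planar}. Suppose, for contradiction, that the disjoint-paths predicates were expressible in $\FO+\conn$; that is, suppose that for every $k$ there were an $\FO+\conn$ formula $\delta_k(x_1,y_1,\ldots,x_k,y_k)$ equivalent, on every graph, to the atom $\disjp_k[(x_1,y_1),\ldots,(x_k,y_k)]$. I will then assemble an $\FO+\conn$ sentence defining planarity, which contradicts \cref{thm-conn-planar}. By Kuratowski's theorem a graph is planar if and only if it contains no subdivision of $K_5$ and no subdivision of $K_{3,3}$, that is, if and only if it excludes $K_5$ and $K_{3,3}$ as topological minors, so it suffices to express topological-minor containment of these two fixed graphs.

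The key step is to express ``$G$ contains $K_5$ as a topological minor'' using a single disjoint-paths atom together with first-order quantification over the five branch vertices:
\[
  \psi_{K_5} \coloneqq \exists x_1\cdots\exists x_5\Bigl(\bigwedge_{1\le i<j\le 5} x_i\neq x_j \ \wedge\ \disjp_{10}\bigl[(x_1,x_2),(x_1,x_3),\ldots,(x_4,x_5)\bigr]\Bigr),
\]
where the disjoint-paths atom ranges over all ten pairs of branch vertices. Analogously, I would express topological $K_{3,3}$-containment by a sentence $\psi_{K_{3,3}}$ that quantifies over six pairwise distinct branch vertices $a_1,a_2,a_3,b_1,b_2,b_3$ and uses a $\disjp_9$ atom demanding internally vertex-disjoint paths for the nine pairs $(a_i,b_j)$. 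Replacing the atoms $\disjp_{10}$ and $\disjp_9$ by the assumed $\FO+\conn$ formulas $\delta_{10}$ and $\delta_9$ turns $\psi_{K_5}$ and $\psi_{K_{3,3}}$ into $\FO+\conn$ formulas, whence $\neg\psi_{K_5}\wedge\neg\psi_{K_{3,3}}$ is an $\FO+\conn$ sentence defining planarity, yielding the contradiction.

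The only genuinely non-routine point, which I expect to be the main obstacle, is to verify that $\psi_{K_5}$ really captures topological $K_5$-containment rather than a weaker property. A subdivision of $K_5$ requires not only ten internally vertex-disjoint paths joining the branch vertices pairwise, but also that the internal vertices of each path avoid all branch vertices. I would show that pairwise internal disjointness already enforces this: if a branch vertex $x_m$ occurred as an \emph{internal} vertex of the path $P_{ij}$ joining $x_i$ and $x_j$, then, since $x_m$ is an \emph{endpoint} of the path $P_{m\ell}$ joining it to some other branch vertex $x_\ell$, the vertex $x_m$ would appear in both $P_{ij}$ and $P_{m\ell}$ without being an endpoint of both, contradicting internal disjointness. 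Hence, once the branch vertices are pairwise distinct, the $\disjp_{10}$ atom witnesses exactly a subdivision of $K_5$, and the same argument applies to $K_{3,3}$. This completes the reduction and establishes the corollary.
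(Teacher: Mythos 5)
Your proof is correct and takes essentially the same route as the paper: the paper likewise derives the corollary from \cref{thm-conn-planar} together with the $\FODP$-definability of topological-minor containment of $K_5$ and $K_{3,3}$ (via the formula $\phi^{top}_H$ given later in \cref{sec:fodp}). Your additional verification that internal vertex-disjointness already forces the paths to avoid the branch vertices internally is a correct filling-in of a detail the paper leaves implicit.
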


The proof of the next theorem is deferred to the next section, as it is
a consequence of the fact that the even stronger logic
$\FODP$ cannot express bipartiteness (\cref{thm-dp-bipart}).

\begin{theorem}\label{thm-con-bipart}
  Bipartiteness cannot be expressed in $\FOconn$.
\end{theorem}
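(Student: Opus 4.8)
The plan is to prove the statement by an Ehrenfeucht--Fra\"iss\'e argument, exhibiting for every $k$ and $q$ two graphs that agree on all $\FO+\conn_k$ sentences of quantifier rank $q$ yet differ in bipartiteness. The natural candidates are a long even cycle and a long odd cycle: set $A = C_{2n}$ and $B = C_{2n+1}$ with $n = 2^{q+1}$. Here $A$ is bipartite while $B$, being an odd cycle, is not. By \cref{t:ef-conn-theorem} it suffices to show that Duplicator wins the $(\conn_{k,q})$-game on $A$ and $B$; since $k$ and $q$ are arbitrary, this rules out any $\FO+\conn$ sentence defining bipartiteness.

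First I would set up Duplicator's strategy, which is the textbook long-cycle strategy. Fix a cyclic orientation of each cycle. Duplicator maintains, after round $i$, two invariants: (i) the clockwise cyclic order of the pebbles is the same in $A$ and $B$, and (ii) for any two pebbles whose shorter arc-distance is at most $2^{q-i}$ in one graph, that arc-distance is identical in the other. When Spoiler plays a new vertex, Duplicator locates the two pebbles flanking it in cyclic order and responds by the usual halving rule: if the flanking gap is short she copies the exact offset, and if it is long she places the response far from both flanks so that the two resulting sub-gaps stay long. With $n = 2^{q+1}$ the circumference is large enough that long gaps remain nonempty through all $q$ rounds, so both invariants survive; invariant~(ii) at $i=q$ forces adjacency to be preserved, so $\bar a \mapsto \bar b$ is a partial isomorphism.

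The crux is the connectivity winning condition, and this is the step I expect to need the most care. Here I would isolate the key observation that in a cycle the truth of $\conn_\ell(x,y,z_1,\dots,z_\ell)$ depends only on the cyclic betweenness of its arguments: deleting $z_1,\dots,z_\ell$ splits the cycle into arcs, and $x,y$ remain connected exactly when one of the two arcs of the cycle between them contains none of the $z_i$. This is entirely a function of the circular order of $x,y,z_1,\dots,z_\ell$, and not of the exact distances or of the parity of the cycle length. Since invariant~(i) preserves the cyclic order of all pebbles, every $\conn_\ell$ predicate with $\ell \le k$ evaluates identically on the two pebble tuples, which is precisely the second winning condition of the $(\conn_{k,q})$-game. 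Thus the connectivity predicates grant Spoiler nothing beyond cyclic betweenness, and hence nothing beyond the ordinary cycle game, which he cannot win on long cycles.

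Finally, I note the route taken in the next section of the paper: because separator logic is a fragment of disjoint-paths logic, the theorem also follows at once from the stronger \cref{thm-dp-bipart}, which asserts that bipartiteness is not expressible in $\FODP$. The direct cycle argument above is self-contained and relies only on \cref{t:ef-conn-theorem}, whereas the deferred proof establishes the same indistinguishability for the richer disjoint-paths game in one stroke; either suffices, and the main obstacle in both is the same verification that long even and odd cycles look alike to the relevant game.
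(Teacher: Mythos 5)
Your proof is correct, but it takes a different route from the paper's. The paper does not prove \cref{thm-con-bipart} directly: it defers the statement entirely to \cref{thm-dp-bipart}, observing that $\FOconn\subseteq\FODP$ and then showing in the next section that even $\FODP$ cannot express bipartiteness, via the $(\DP_{k,q})$-game on the cycles $C_{2^q}$ and $C_{2^q+1}$. You instead run the $(\conn_{k,q})$-game directly on a long even and a long odd cycle with the standard distance-halving strategy, and discharge the extra winning condition by the observation that on a cycle the truth of $\conn_\ell(x,y,z_1,\dots,z_\ell)$ is determined purely by the cyclic order of (and equalities among) its arguments --- deleting the $z_i$ splits the cycle into arcs, and $x,y$ stay connected iff one of the two arcs between them is free of deleted vertices --- so preserving cyclic order preserves all $\conn_\ell$ predicates. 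This cyclic-betweenness argument is sound and is, if anything, cleaner than what the paper needs for the disjoint-paths predicates (where the corresponding claim that circular order controls $\disjp_\ell$ is stated rather tersely). What the paper's route buys is the stronger result for $\FODP$ in a single argument, with \cref{thm-con-bipart} falling out as a corollary; what your route buys is a self-contained proof relying only on \cref{t:ef-conn-theorem}, independent of \cref{sec:fodp}. You also correctly note the corollary route at the end, so nothing is missing.
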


Finally, we show that the $\FO+\conn_k$ hierarchy is strict by proving
that $(k+2)$-connectivity cannot be expressed by $\FO+\conn_k$.
On the other hand, $(k+2)$-connectivity can be expressed by $\FO+\conn_{k+1}$ (\cref{exp:connectivity}).

\begin{theorem}\label{thm-strict-hierarchy}
  $(k+2)$-connectivity cannot be expressed by $\FO+\conn_k$. In
  particular, the $\FO+\conn_k$ hierarchy is strict, that is, $\FO+\conn_0\subsetneq
  \FO+\conn_1\subsetneq \ldots$
\end{theorem}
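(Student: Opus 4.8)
The plan is to apply \cref{t:ef-conn-theorem} by constructing, for every~$q$, a pair of graphs $G_q$ and $H_q$ that satisfy exactly the same $\FO+\conn_k$ sentences of quantifier rank at most~$q$, where $G_q$ is $(k+2)$-connected and $H_q$ is not. Since $(k+2)$-connectivity is expressible in $\FO+\conn_{k+1}$ by \cref{exp:connectivity}, this shows $(k+2)$-connectivity is not in $\FO+\conn_k$ and hence that $\FO+\conn_k\subsetneq\FO+\conn_{k+1}$ for every $k\ge 0$.

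The point that makes this easy is that $\conn_\ell$ predicates are \emph{blind} on sufficiently connected graphs. I arrange that \emph{both} $G_q$ and $H_q$ are $(k+1)$-connected (for $G_q$ this will be a large overshoot). In a $(k+1)$-connected graph the deletion of at most~$k$ vertices leaves it connected, so for every $\ell\le k$ the atom $\conn_\ell(x,y,z_1,\ldots,z_\ell)$ holds if and only if $x,y\notin\{z_1,\ldots,z_\ell\}$. Replacing each such atom by the quantifier-free formula $\bigwedge_{1\le i\le\ell}(x\neq z_i\wedge y\neq z_i)$ turns any $\FO+\conn_k$ formula $\phi$ into a plain first-order formula $\phi'$ of the same quantifier rank that is equivalent to $\phi$ on all $(k+1)$-connected graphs. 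Equivalently, on two $(k+1)$-connected graphs the winning condition of the $(\conn_{k,q})$-game collapses to that of the ordinary $q$-round Ehrenfeucht-Fra\"iss\'e game, because its second condition is then forced by the partial-isomorphism condition. Either way, it remains to find $(k+1)$-connected graphs $G_q$ and $H_q$, with $G_q$ $(k+2)$-connected and $H_q$ not, such that $G_q\equiv_q H_q$ in plain first-order logic.

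For the latter I would make both $G_q$ and $H_q$ be $D$-regular with $D=k+3$, of girth larger than $2r_q$, where $r_q$ is the Hanf locality radius for quantifier rank~$q$, and with enough vertices that the ball of radius $r_q$ around every vertex is isomorphic to the radius-$r_q$ ball of the infinite $D$-regular tree. Both graphs then realize a single local type, each above the Hanf threshold, so Hanf's locality theorem (see, e.g., \cite{libkin2013elements}) yields $G_q\equiv_q H_q$. For $G_q$ I take a $D$-regular, $D$-connected graph of large girth (a high-girth expander), which is in particular $(k+2)$-connected. For $H_q$ I take a $D$-regular graph of large girth whose vertex-connectivity is exactly $k+1$: glue two highly connected $D$-regular high-girth pieces to a common separator $S$ of $k+1$ vertices, adding only edges incident to $S$ so that $D$-regularity is preserved, every path between the two sides is forced through $S$, and all cycles through $S$ stay long. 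Then $S$ certifies that $H_q$ is not $(k+2)$-connected, while $H_q$ remains $(k+1)$-connected.

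Combining the two parts gives the theorem: a supposed $\FO+\conn_k$ sentence $\phi$ of rank $q$ defining $(k+2)$-connectivity has a plain first-order rewriting $\phi'$ of rank $q$ agreeing with $\phi$ on the $(k+1)$-connected graphs $G_q,H_q$, and $G_q\equiv_q H_q$ forces $G_q\models\phi\Leftrightarrow H_q\models\phi$, contradicting that $\phi$ separates $G_q$ from $H_q$. I expect the only real difficulty to be the explicit construction of $H_q$: one must realize a $D$-regular graph of prescribed large girth with vertex-connectivity \emph{exactly} $k+1$, which means both exhibiting the separator of size $k+1$ and, more delicately, verifying the matching lower bound that no separator of size at most~$k$ exists --- this lower bound is precisely what guarantees that the $\conn_k$ predicates remain blind on $H_q$. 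A route avoiding Hanf, and closer to the proof of \cref{thm-conn-planar}, would instead use concrete cyclic gadget graphs and have Duplicator maintain a distance-based invariant directly in the $(\conn_{k,q})$-game.
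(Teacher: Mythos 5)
Your core reduction is exactly the paper's: on $(k+1)$-connected graphs every atom $\conn_\ell(x,y,z_1,\ldots,z_\ell)$ with $\ell\le k$ is equivalent to the quantifier-free condition $\bigwedge_{i}(x\neq z_i\wedge y\neq z_i)$, so the $(\conn_{k,q})$-game collapses to the ordinary $q$-round EF game and it suffices to exhibit a plain-$\FO$-indistinguishable pair with the right connectivities. Where you diverge is in the witness graphs. The paper takes any $q$-equivalent pair $G_q$ (connected) and $H_q$ (disconnected) --- such a pair exists because connectivity is not $\FO$-definable and $\simeq_q$ has finitely many classes --- and joins a $K_{k+1}$ completely to each: the clique makes both lifted graphs $(k+1)$-connected for free, $G_q^k$ becomes $(k+2)$-connected, deleting the clique disconnects $H_q^k$, and $G_q^k\simeq_q H_q^k$ follows by composing Duplicator's strategies componentwise. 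Your route instead builds $(k+3)$-regular high-girth graphs and invokes Hanf locality. That is workable in principle, but it front-loads all the difficulty into the one step you explicitly leave open: producing a $(k+3)$-regular large-girth graph whose vertex connectivity is \emph{exactly} $k+1$, including the lower bound that no separator of size $k$ exists. That existence claim is plausible but needs a real argument (the gluing must preserve regularity, girth, and $(k+1)$-connectivity simultaneously), and as written it is the load-bearing unproved part of your proof; without the $(k+1)$-connectivity lower bound the blindness of the $\conn_\ell$ predicates on $H_q$ fails and the whole reduction breaks. The paper's clique trick sidesteps this entirely, since high connectivity is manufactured by the joined clique rather than found inside a delicately constructed regular graph, and it also avoids Hanf in favour of a two-line strategy composition. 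I would keep your first two paragraphs verbatim and replace the high-girth construction by the ``join a $K_{k+1}$'' lift of an arbitrary connected/disconnected $q$-equivalent pair.
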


\begin{proof}
  Let $k$ be an integer. For every integer $q$, we choose two graphs $G_q$ and $H_q$ such that:
  \begin{itemize}
    \item $G_q$ is connected,
    \item $H_q$ is not connected, and
    \item $G_q \simeq_q H_q$.
  \end{itemize}

This is possible, as connectivity is not first-order definable and
$\simeq_q$ has only finitely many equivalence classes.

  Then, we define the graph $G_q^k$ (resp.\ $H_q^k$) as the disjoint union of $G_q$ (resp.\ $H_q$) and $K_{k+1}$, a clique of size $k+1$,
  and connect the vertices of the clique with all vertices of $G_q$ (resp.~$H_q$), that
  is, we add the additional edges such that $(x,y)\in E(G_q^k)$ (resp.\ $(x,y)\in E(H_q^k)$) if $x\in G_q$ (resp.\ $x\in H_q$) and $y\in K_{k+1}$.
  Obviously, $G_q^k$ is ($k+2$)-connected (the deletion of any
  $k+1$ vertices cannot disconnect $G_q^k$), while $H_q^k$
  is not ($k+2$)-connected (the deletion of the copy of $K_{k+1}$
  disconnects $H_q^k$).

  The same argument shows that every $\conn_k(x,y,z_1,\ldots, z_k)$
  can be expressed by an atomic plain first-order formula:
  in both graphs (the valuations of) $x$ and $y$ are not connected after
  the deletion of (the valuations of) $z_1,\ldots, z_k$ if and only if
  $x$ or $y$ is equal to one of the~$z_i$. Hence, to prove
  $G_q^k\simeq_{\conn_{k,q}}H_q^k$ it suffices to prove
  $G_q^k\simeq_q H_q^k$, and this finishes the proof.
  \pagebreak
  \begin{claim}
    For all integers $q,k$ we have $G_q^{k}\simeq_q H_q^{k}$.
  \end{claim}

  \begin{proof}
    The following is obviously a winning strategy for Duplicator in the
    $q$-round EF game on $G_q^k$ and $H_q^k$.
    If Spoiler plays a pebble in the subgraph $G_q$ or $H_q$, Duplicator can respond by a pebble in the subgraph $H_q$ or $G_q$ according to the winning strategy of Duplicator in the EF game on $G_q$ and $H_q$.
    Otherwise, if Spoiler picks a pebble in the subgraph $K_{k+1}$ of $G_q^k$ or~$H_q^k$, Duplicator can respond by a pebble in the subgraph $K_{k+1}$ of the other graph $H_q^k$ or $G_q^k$.
  \end{proof}
  This concludes the proof of \cref{thm-strict-hierarchy}.
\end{proof}

% !TEX root = main.tex

\section{Disjoint-paths logic}
\label{sec:fodp}

In this section, we study the expressive power of disjoint-paths logic
$\FODP$. We again fix a signature $\sigma$ that does not contain
the symbol $\disjp_k$ for any $k\geq 1$ and that does contain
a binary (edge) relation symbol $E$. The disjoint paths
predicates will always refer to this relation.
We let $\sigma+\disjp\coloneqq \sigma\cup \{\disjp_k : k\geq 1\}$, where each $\disjp_k$ is a
$2k$-ary relation symbol.

\begin{definition}\label{def:fodp}
  The formulas of $(\FODP)[\sigma]$ are the formulas of
  $\FO[\sigma+\disjp]$. We usually simply write $\FODP$,
  when $\sigma$ is understood from the context.

  For a $\sigma$-structure $\strA$, an assignment~$\bar a$
  and an $\FODP$ formula $\phi(\bar x)$, we define the satisfaction relation $(\strA,\bar a)\models\phi(\bar x)$
  as for first-order logic, where an atomic predicate $\disjp_k[(x_1, y_1),\ldots
  (x_k,y_k)]$ is evaluated as follows.
  Assume that the universe of~$\strA$ is~$A$ and let \mbox{$G=(A,E^\strA)$} be the
  graph on vertex set $A$ and edge set $E^\strA$. Then $(\strA, \bar a)$
  models $\disjp_k[(x_1,y_1),\ldots,(x_k, y_k)]$ if and only if in $G$ there
  exist $k$ internally vertex-disjoint paths $P_1,\ldots, P_k$, where
  $P_i$ connects~$\bar a(x_i)$ and~$\bar a(y_i)$.
\end{definition}

As previously mentioned, it is natural to consider these predicates for both
undirected and directed graphs. We will, however, in this work only study the undirected case.

\smallskip
We write $\FODP_k$ for the fragment of $\FODP$ that uses
only $\disjp_\ell$ predicates for $\ell\leq k$.
The quantifier rank of
an $\FODP$ formula is defined as for plain first-order logic.
For structures $\strA$ with universe $A$ and $\bar a\in A^m$ and $\strB$ with
universe $B$ and $\bar b\in B^m$, we write $(\strA,\bar a)\equiv_{\DP}
(\strB, \bar b)$ if $(\strA,\bar a)$ and $(\strB,\bar b)$ satisfy the same
$\FODP$ formulas, that is, for all $\phi(\bar x)$ we have $\strA\models
\phi(\bar a)\Leftrightarrow \strB\models \phi(\bar b)$. Similarly, we write
$(\strA,\bar a)\equiv_{\DP_k}(\strB, \bar b)$ and $(\strA,\bar a)\equiv_{\DP_{k,q}}
(\strB, \bar b)$ if $(\strA,\bar a)$ and $(\strB,\bar b)$ satisfy the same
$\FODP_k$ formulas and the same $\FODP_k$ formulas of quantifier
rank at most $q$, respectively.

\subsection{Expressive power of disjoint-paths logic}

We now study the expressive power of disjoint-paths logic.

\begin{observation}
  $\FOconn \subseteq \FODP$ because
  $\conn_k(x,y,z_1,\ldots,z_k)$ is equivalent to $\disjp_{k+1}[(x,y),(z_1,z_1),\ldots ,(z_k,z_k)] \wedge \bigwedge\limits_{i\le k}(z_i\neq x \wedge z_i \neq y) $.
\end{observation}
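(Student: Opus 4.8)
The statement asserts an equivalence between a single $\conn_k$ atom and a Boolean combination of a $\disjp_{k+1}$ atom with equalities; the inclusion $\FOconn\subseteq\FODP$ then follows by a routine structural induction, replacing each $\conn_k$ subformula by the claimed equivalent and observing that the Boolean connectives and quantifiers are interpreted identically in both logics. The plan is therefore to verify the atomic equivalence for an arbitrary structure $\strA$ and assignment $\bar a$, and then invoke this substitution.

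The crucial observation is the meaning of a ``path from $z_i$ to $z_i$''. By the definition of a path as a subgraph on \emph{distinct} vertices $v_1,\ldots,v_t$, the case of coinciding endpoints forces $t=1$, so the witnessing path is the trivial one-vertex path consisting of $z_i$ alone. Thus $\disjp_{k+1}[(x,y),(z_1,z_1),\ldots,(z_k,z_k)]$ asserts the existence of a path $P_0$ from $x$ to $y$ that is internally vertex-disjoint from each singleton path $\{z_i\}$. Since internally vertex-disjoint paths may share only endpoints, $z_i$ can appear on $P_0$ only if $z_i$ is itself an endpoint of $P_0$, that is, only if $z_i\in\{x,y\}$.

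For the forward direction, suppose $\conn_k(x,y,z_1,\ldots,z_k)$ holds. By the remark following \cref{defn-foconn}, this already forces $z_i\neq x$ and $z_i\neq y$ for all $i\le k$, so the conjunction holds; moreover there is an $x$--$y$ path $P_0$ in $G-\{z_1,\ldots,z_k\}$, and since $P_0$ avoids every $z_i$ it is internally vertex-disjoint from each singleton $\{z_i\}$, witnessing the $\disjp_{k+1}$ atom. For the backward direction, assume the $\disjp_{k+1}$ atom together with $\bigwedge_{i\le k}(z_i\neq x\wedge z_i\neq y)$. The inequalities guarantee that no $z_i$ is an endpoint of $P_0$, so by the observation above no $z_i$ lies on $P_0$ at all; hence $P_0$ is a path from $x$ to $y$ in $G-\{z_1,\ldots,z_k\}$, and $\conn_k$ holds.

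No serious obstacle is expected here. The only point requiring care is the correct reading of the degenerate paths $P_i$ and of the internal-vertex-disjointness convention, together with the role of the appended inequalities, which serve precisely to rule out the spurious case in which some $z_i$ coincides with an endpoint of the $x$--$y$ path and would otherwise make the $\disjp_{k+1}$ atom true while $\conn_k$ is false.
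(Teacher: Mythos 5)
Your proof is correct and takes essentially the same approach as the paper, which simply asserts the atomic equivalence without further justification; your verification of the degenerate singleton paths $z_i$--$z_i$, of the internal-vertex-disjointness convention, and of why the appended inequalities $z_i\neq x$, $z_i\neq y$ are needed fills in exactly the intended reasoning.
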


Moreover, the inclusion is strict because planarity is not expressible in $\FOconn$ as seen in \cref{crl:disjp-no-foconn}.
We show that planarity and in fact the
property that a graph contains a fixed (topological) minor can be expressed
in $\FODP$.

\begin{example}
  For every fixed graph $H$, there is an
  $\FODP$ formula $\phi^{top}_H$ such that $G\models\phi^{top}_H$ if and only if
  $H\minor^{top} G$.

  Let $n,m,\ell$ respectively be the number of vertices, edges, and isolated vertices in $H$.
  Let $x_1,\ldots x_n$ be $n$ variables. Let $e_1,\ldots,e_m$ be the list of edges
  of $H$, and let $v_{j_s}$ and $v_{j_t}$ be the two endpoints of $e_j$. Finally,
  let $v_{i_1},\ldots,v_{i_\ell}$ be the isolated vertices of $H$. Then,
  \begin{align*}
  \phi^{top}_H :=\exists x_1,\ldots  x_n \big(& \bigwedge_{i\neq j} x_i\neq x_j \quad \wedge \\ & \disjp[(x_{e_{1_s}},x_{e_{1_t}}), \ldots (x_{e_{m_s}},x_{e_{m_t}}),(x_{i_1},x_{i_1}),\ldots (x_{i_\ell},x_{i_\ell})] \big).
  \end{align*}
\end{example}

\begin{example}
  For every fixed graph $H$, there is an $\FODP$ formula $\phi_H$ such that $G\models\phi_H$ if and only if $H\minor G$.
  This is because, for every graph $H$, there exists a finite family of graphs $H_1,\ldots,H_\ell$ such that $H\minor G$ if and only if there is an $i\le \ell$ such that $H_i\minor^{top}G$. This family can be obtained by considering all possibilities of replacing every branch set representing a vertex of $H$ of degree $d\geq 3$
with a tree with at most $d$ leaves and hardcoding their shapes by disjoint paths.
\end{example}

\begin{example}
  Planarity can be expressed in $\FODP$. This is a corollary of the previous example, using the formula $\phi_{planar} := \neg \phi_{K_5} \wedge \neg \phi_{K_{3,3}}$.
\end{example}

\subsection{The limits of disjoint-paths logic}

We now study the limits of disjoint-paths logic and show
that bipartiteness cannot be expressed in $\FODP$.
We also show that the hierarchy on $(\FODP_k)_{k\geq 1}$ is strict.
These results are based again on an adaptation of the standard Ehrenfeucht-Fra\"iss\'e game.

The \emph{($\DP_{k,q}$)-game} is played just as the $q$-round EF game, but the winning condition is changed as follows. If in $q$ rounds the players have chosen $\bar a=a_1,\ldots, a_q$ and $\bar b=b_1,\ldots, b_q$, then Duplicator wins if

\begin{enumerate}
  \item the mapping $\bar a\mapsto \bar b$ is a partial isomorphism of $\strA$ and $\strB$, and
  \item for every $\ell\le k$ and every sequence $(i_1,\ldots,i_{2\ell})$ of numbers in $\{1,\ldots,q\}$ we have
    \begin{align*}
      \strA&\models \disjp[(a_{i_1},a_{i_2}),\ldots,(a_{i_{2\ell-1}},a_{i_{2\ell}})]\\
      \quad \Longleftrightarrow\quad \strB&\models \disjp[(b_{i_1},b_{i_2}),\ldots,(b_{i_{2\ell-1}},b_{i_{2\ell}})].
    \end{align*}
\end{enumerate}

Otherwise, Spoiler wins. We say that Duplicator {\em wins the ($\DP_{k,q}$)-game} on
$\strA$ and $\strB$ if she can force a win
no matter how Spoiler plays. We then write $\strA\simeq_{\DP_{k,q}} \strB$.

By following the lines of the proof of the classical Ehrenfeucht-Fra\"iss\'e Theorem
we can prove the following theorem.

\begin{theorem}
  \label{t:ef-dp-theorem}
  Let $\mathfrak{A}$ and $\mathfrak{B}$ be two $\sigma$-structures where $\sigma$ is purely rational (and contains a binary relation symbol $E$ that is
  interpreted on both structures as an irreflexive and symmetric relation). Then
  $\strA\equiv_{\DP_{k,q}}\strB$ if and only if $\strA\simeq_{\DP_{k,q}} \strB$.
\end{theorem}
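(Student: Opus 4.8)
The plan is to deduce the statement from the classical Ehrenfeucht-Fra\"iss\'e Theorem (\cref{t:ef-theorem}) by passing to suitable relational \emph{expansions} of $\strA$ and $\strB$, rather than by redoing the back-and-forth induction from scratch. Fix $k$ and set $\sigma_k \coloneqq \sigma \cup \{\disjp_\ell : 1\le \ell\le k\}$, where each $\disjp_\ell$ is now regarded as an ordinary $2\ell$-ary relation symbol. To a $\sigma$-structure $\strA$ with universe $A$ I associate the $\sigma_k$-structure $\strA^+$ on the same universe that retains all $\sigma$-relations of $\strA$ and, writing $G=(A,E^\strA)$, interprets each $\disjp_\ell$ as the relation $\{(u_1,v_1,\ldots,u_\ell,v_\ell)\in A^{2\ell} : G \text{ contains internally vertex-disjoint paths } P_1,\ldots,P_\ell \text{ with } P_i \text{ connecting } u_i,v_i\}$. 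By \cref{def:fodp} this is a well-defined relation, so $\strA^+$ is a genuine $\sigma_k$-structure; define $\strB^+$ analogously.

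Then I would establish two correspondences. On the level of formulas, the $\FODP_k$ formulas over $\sigma$ are syntactically the same objects as the $\FO$ formulas over $\sigma_k$, and by construction the interpretation of a $\disjp_\ell$-atom in $(\strA,\bar a)$ agrees with that of the corresponding relational atom in $(\strA^+,\bar a)$; hence $(\strA,\bar a)\models\phi(\bar a)$ iff $(\strA^+,\bar a)\models\phi(\bar a)$ for every such $\phi$, and therefore $\strA\equiv_{\DP_{k,q}}\strB$ holds iff $\strA^+\equiv_q\strB^+$ in the sense of plain $\FO$-equivalence over $\sigma_k$. On the level of games, the moves of the $(\DP_{k,q})$-game on $\strA,\strB$ are literally the moves of the classical $q$-round game on $\strA^+,\strB^+$, while its winning condition splits exactly as: condition (1) states that $\bar a\mapsto\bar b$ is a partial isomorphism of the $\sigma$-structures, and condition (2) adds preservation of the $\disjp_\ell$-relations for $\ell\le k$; jointly these say precisely that $\bar a\mapsto\bar b$ is a partial isomorphism of the $\sigma_k$-expansions. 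Thus $\strA\simeq_{\DP_{k,q}}\strB$ iff $\strA^+\simeq_q\strB^+$. Applying \cref{t:ef-theorem} to $\strA^+$ and $\strB^+$ then yields $\strA\equiv_{\DP_{k,q}}\strB \Leftrightarrow \strA^+\equiv_q\strB^+ \Leftrightarrow \strA^+\simeq_q\strB^+ \Leftrightarrow \strA\simeq_{\DP_{k,q}}\strB$, which is the claim.

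The one point I would write out in full — and the only place where anything needs checking — is the equivalence between winning condition (2) and preservation of the relations $\disjp_\ell^{\strA^+}$ under $\bar a\mapsto\bar b$: a bijection between substructures is a partial isomorphism exactly when, for every relation symbol and every index sequence into its domain (repetitions allowed), the relation holds of the image tuple iff of the source tuple, and the quantification over all sequences $(i_1,\ldots,i_{2\ell})$ in condition (2) matches this requirement verbatim. It is also worth recording where finiteness enters: the reduction needs $\sigma_k$ to be finite so that \cref{t:ef-theorem} supplies the nontrivial direction $\equiv_q\Rightarrow\simeq_q$ (which rests on there being only finitely many formulas of bounded quantifier rank up to logical equivalence), and restricting to $\FODP_k$ rather than all of $\FODP$ is precisely what keeps the expanded signature finite. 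For completeness one could instead give the direct proof by induction on $q$, exactly as in the argument behind \cref{t:ef-conn-theorem} with the $\conn$-clause of the winning condition replaced by the $\disjp$-clause; both routes are routine once the expansion viewpoint is adopted.
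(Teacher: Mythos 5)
Your argument is correct, but it takes a different route from the paper, which offers no written proof and instead appeals to ``following the lines'' of the classical Ehrenfeucht--Fra\"iss\'e argument, i.e.\ redoing the back-and-forth induction with the augmented winning condition (as for \cref{t:ef-conn-theorem}). You instead black-box \cref{t:ef-theorem} by expanding the signature: since each $\disjp_\ell$ is semantically just a $2\ell$-ary relation on the universe determined by the ambient graph (not by the induced substructure on the pebbled elements), the $(\DP_{k,q})$-game on $\strA,\strB$ is literally the $q$-round EF game on the expansions $\strA^+,\strB^+$, and $\FODP_k[q]$-equivalence is literally $\FO[q]$-equivalence over $\sigma_k$. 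Your two checkpoints are exactly the right ones: condition (2) of the winning condition, quantified over all index sequences with repetitions, is verbatim the preservation clause for the new relation symbols in the definition of partial isomorphism of the expansions; and restricting to $\ell\le k$ is what keeps $\sigma_k$ finite so that the nontrivial direction $\equiv_q\Rightarrow\simeq_q$ of \cref{t:ef-theorem} applies. What the reduction buys is that nothing has to be re-proved; what the direct induction buys is independence from the observation that the predicates can be pre-computed as static relations (irrelevant here, but the reason one cannot always reduce game-theoretic characterizations of logic extensions to the classical theorem). Both routes yield the statement.
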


\begin{theorem}\label{thm-dp-bipart}
  Bipartiteness is not definable in $\FODP$.
\end{theorem}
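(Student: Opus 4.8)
The plan is to use the $(\DP_{k,q})$-game characterization from \cref{t:ef-dp-theorem}: to show bipartiteness is not expressible in $\FODP$, it suffices to exhibit, for each pair $(k,q)$, two graphs $\strA_{k,q}$ and $\strB_{k,q}$ such that $\strA_{k,q}$ is bipartite, $\strB_{k,q}$ is not bipartite, yet $\strA_{k,q}\simeq_{\DP_{k,q}}\strB_{k,q}$. The natural candidates are two long cycles: an even cycle $C_{2n}$ (bipartite) against an odd cycle $C_{2n+1}$ (non-bipartite), where $n$ is chosen enormous relative to $k$ and $q$. The intuition is that on a single long cycle, the disjoint-paths predicate is almost trivial --- any two vertices are joined by two internally vertex-disjoint paths (the two arcs of the cycle), and any larger family of disjoint paths is heavily constrained by the cyclic structure --- so the predicate cannot detect the global parity that distinguishes the two cycles.

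First I would pin down exactly how $\disjp_\ell$ behaves on a cycle. On a cycle, removing the interiors of $\ell$ internally vertex-disjoint paths is a packing condition: each requested pair $(x_i,y_i)$ must be connected by an arc, and these arcs must be internally disjoint. I would argue that whether a given $\disjp_\ell[(x_{i_1},x_{i_2}),\dots]$ holds depends only on the cyclic order of the chosen pebbles and not on the exact distances between them, provided the cycle is long enough that no ``short'' obstruction forces a collision. The key combinatorial fact to isolate is that satisfiability of the disjoint-paths request on a cycle is determined by the cyclic arrangement (the non-crossing/matching structure) of the endpoints, which Duplicator can preserve. So the invariant Duplicator maintains should record the cyclic order of the pebbles together with their pairwise arc-distances up to a threshold $2^{q-i}$ that halves each round, exactly mirroring the distance-halving strategy used in the proof of \cref{thm-conn-planar}.

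Concretely, I would have Duplicator play a standard EF strategy on cycles: maintain that corresponding pebbles appear in the same cyclic order on $C_{2n}$ and $C_{2n+1}$, and that the arc-distance between two consecutive pebbles in the cyclic order either agrees exactly on both sides or is ``large'' (exceeds $2^{q-i}$) on both sides. With $n=2^{q+1}$ or so, there is always enough slack in the odd cycle (one extra vertex spread across a long arc) to absorb the parity discrepancy into a gap that no pebble can probe within the remaining rounds. Once the cyclic order and the bounded arc-distances are matched, the partial isomorphism condition (adjacency and equality) follows, and --- crucially --- every $\disjp_\ell$ request evaluates identically on both sides because, as argued above, its truth depends only on the matched cyclic-order-and-short-distance data.

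The main obstacle will be verifying the claim that $\disjp_\ell$ on a cycle truly factors through the cyclic-order information Duplicator preserves, and in particular handling requests where some pebbles coincide or where a pair $(x_i,y_i)$ has $x_i=y_i$ (the isolated-vertex form used in the minor examples); I would need to check that degenerate and near-degenerate arc packings behave the same on both cycles. A secondary subtlety is that the two ``free'' endpoints of the extra vertex in the odd cycle must be kept far from all pebbled vertices so that the parity defect never becomes locally visible; this is where the distance-halving budget $2^{q-i}$ does the real work, guaranteeing that after $q$ rounds Spoiler can never close the gap. Assuming these two points are settled, $\strA_{k,q}\simeq_{\DP_{k,q}}\strB_{k,q}$ follows, contradicting any putative $\FODP$ sentence for bipartiteness, and since this works for every $k$ and $q$, bipartiteness is not expressible in $\FODP$. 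This also immediately yields \cref{thm-con-bipart}, since $\FOconn\subseteq\FODP$.
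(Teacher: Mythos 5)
Your proposal is correct and matches the paper's own proof essentially exactly: the paper also plays the $(\DP_{k,q})$-game on an even cycle ($2^q$ vertices) versus an odd cycle ($2^q+1$ vertices), with Duplicator maintaining the same circular order of the pebbles together with arc-distances that either agree exactly or both exceed a threshold halving each round, and observes that preserving the cyclic order is what makes all $\disjp_\ell$ requests evaluate identically on both sides.
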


\begin{proof}
  Let $q$ be an integer, and
  let $G$ be a cycle graph with $2^q$ vertices and $H$ a cycle graph with $2^q+1$ vertices.
  Then, $G$ is bipartite because it has an even number of vertices, and $H$ is not bipartite because it has an odd number of vertices.
  We want to show that $G\simeq_{\DP_{k,q}} H$ by induction over $q$.

  We define the distance $d(x,y)$ of two vertices $x$ and $y$ as the length of the shortest path between $x$ and $y$.

  Let $\bar g=(g_1,\ldots,g_i)$ be the first $i$ moves in $G$ and similarly $\bar h=(h_1,\ldots,h_i)$ the first $i$ moves in $H$. We can prove by induction that Duplicator can play in such a way that after round $i$ of the ($\DP_{k,q}$)-game the following conditions hold for all $j,\ell\leq i$:
  \begin{enumerate}
    \item If $d(g_j,g_\ell)<2^{q-i+1}$, then $d(g_j,g_\ell)=d(h_j,h_\ell)$.
    \item If $d(g_j,g_\ell)\geq 2^{q-i+1}$, then $d(h_j,h_\ell)\geq 2^{q-i+1}$.
    \item The pebbles are placed in $G$ and $H$ with the same ``circular order''.
  \end{enumerate}
  By the first two conditions, the partial isomorphism $\bar g\mapsto\bar h$ can be ensured.
  Furthermore, the third condition implies that the second condition for
  Duplicator's win is also satisfied.

  The base case $i=1$ of the induction is trivial because $d(g_1,g_1)=d(h_1,h_1)=0$.

  For the induction step, suppose that $G\simeq_{\DP_{k,i}} H$ holds and Spoiler is making his $(i+1)$-st move in G. The case of $H$ is equivalent.

  If Spoiler picks $g_j$ for some $j\leq i$, a pebble that was already played before, Duplicator can choose $h_j$, and the conditions are fulfilled by the induction hypothesis.
  Otherwise, Spoiler picks a pebble $g_{i+1}$ that wasn't played before. Now we have to differentiate two cases:
  \begin{enumerate}
    \item There is only one other pebble that was already played, $g_j=g_1,j\leq i$. Then, we can find $h_{i+1}$ such that $d(h_1,h_{i+1})=d(g_1,g_{i+1})$.
    \item $g_{i+1}$ lies on the shortest path of $g_j$ and $g_\ell$ with $j,\ell\leq i$ such that there is no other $g_n,n\leq i$ that lies on this path. Then, there are two possibilities:
    \begin{itemize}
      \item $d(g_j,g_\ell)< 2^{q-i+1}$: Then $d(h_j,h_\ell)<2^{q-i+1}$ and we can find $h_{i+1}$ on the shortest path of $h_j$ and $h_\ell$ such that $d(h_j,h_{i+1})=d(g_j,g_{i+1})$ and $d(h_{i+1},h_\ell)=d(g_{i+1},g_\ell)$.
      \item $d(g_j,g_\ell)\geq 2^{q-i+1}$: Then $d(h_j,h_\ell)\geq 2^{q-i+1}$ and there are three cases:
      \begin{enumerate}
        \item $d(g_j,g_{i+1})< 2^{q-i}$: Then $d(g_{i+1},g_\ell)\geq 2^{q-i}$ and we can choose $h_{i+1}$ on the shortest path of $h_j$ and $h_\ell$ such that $d(h_j,h_{i+1})=d(g_j,g_{i+1})$ and $d(h_{i+1},h_\ell)\geq 2^{q-i}$.
        \item $d(g_{i+1},g_\ell)< 2^{q-i}$: This case is similar to the previous one.
        \item $d(g_j,g_{i+1})\geq 2^{q-i}$ and $d(g_{i+1},g_\ell)\geq 2^{q-i}$: Since $d(h_j,h_\ell)\geq 2^{q-i+1}$, we can find $h_{i+1}$ with $d(h_j,h_{i+1})\geq 2^{q-i}$ and $d(h_{i+1},h_\ell)\geq 2^{q-i}$ in the middle of the shortest path of $h_j,$ and $h_\ell$.
      \end{enumerate}
    \end{itemize}
  \end{enumerate}
  Thus, in all cases, the conditions are fulfilled. This completes the inductive proof.
\end{proof}

We now show that the hierarchy on $(\FODP_k)_{k\geq 1}$ is strict.

\begin{lemma}\label{lem-dp-kconn}
  For all integers $k\geq 1$, $2k$-connectivity is not expressible in $\FODP_k$.
\end{lemma}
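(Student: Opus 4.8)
The plan is to mirror the proof of \cref{thm-strict-hierarchy}, replacing the $(\conn_{k,q})$-game by the $(\DP_{k,q})$-game and the clique $K_{k+1}$ by a clique of size $2k-1$. Fixing $k\geq 1$, I would choose for every $q$ two graphs $G_q$ and $H_q$ with $G_q$ connected, $H_q$ disconnected, and $G_q\simeq_q H_q$; such graphs exist because connectivity is not first-order definable while $\simeq_q$ has only finitely many classes. I then form $G_q^k$ (resp.\ $H_q^k$) from the disjoint union of $G_q$ (resp.\ $H_q$) and a clique $K_{2k-1}$ by adding all edges between the clique and the base graph, and call the $2k-1$ clique vertices \emph{hubs}. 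The connectivity side is an easy cut argument: deleting the $2k-1$ hubs from $H_q^k$ leaves the disconnected graph $H_q$, so $H_q^k$ is not $2k$-connected; whereas in $G_q^k$ any set of at most $2k-1$ vertices either is exactly the set of all hubs (leaving connected $G_q$) or misses some hub $u$, and since $u$ is adjacent to every other vertex the remainder stays connected, so $G_q^k$ is $2k$-connected (and has more than $2k$ vertices for large $q$).

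The heart of the argument, and what I expect to be the main obstacle, is the claim that for every $\ell$ with $1\leq \ell\leq k$ and every assignment of $x_1,y_1,\ldots,x_\ell,y_\ell$, both $G_q^k$ and $H_q^k$ satisfy $\disjp_\ell[(x_1,y_1),\ldots,(x_\ell,y_\ell)]$; that is, the disjoint-paths predicates are \emph{always true} on these two graphs. I would prove this by a routing-and-counting argument. Classify the pairs: a pair whose endpoints are adjacent is realized by the direct edge, and a pair with $x_i=y_i$ by a single-vertex path, using no internal vertex; in particular every pair with a hub endpoint falls here, since hubs are universal. Each remaining pair has two non-adjacent base endpoints and can be realized by a length-two path $x_i-s-y_i$ through a single hub $s$. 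Let $a$ be the number of distinct hubs occurring as endpoints, let $c$ be the number of pairs carrying a hub endpoint, and let $b$ be the number of pairs that need an internal hub; these $b$ pairs have base endpoints only. Then $c+b\leq \ell\leq k$ and $a\leq\min(2c,2k-1)$, whence $a+b\leq 2c+b=c+(c+b)\leq c+k\leq 2k-1$ when $c\leq k-1$, while if $c=k$ then $b=0$ and $a+b=a\leq 2k-1$. So at least $b$ hubs are free (not endpoints), and assigning to each routing pair a distinct free hub yields $\ell$ pairwise internally vertex-disjoint paths, since all internal vertices are distinct hubs avoiding every endpoint. This establishes the claim.

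Granting the claim, I would finish as follows. Because every $\disjp_\ell$ predicate with $\ell\leq k$ holds identically (always) in both structures, the second winning condition of the $(\DP_{k,q})$-game is automatically satisfied, so it remains only to exhibit a winning Duplicator strategy in the ordinary $q$-round EF game, i.e.\ $G_q^k\simeq_q H_q^k$. Duplicator plays exactly as in the claim inside \cref{thm-strict-hierarchy}: she answers moves in the base graph using her winning strategy for $G_q\simeq_q H_q$, and answers moves in the clique by the matching vertex under the identity isomorphism of the two copies of $K_{2k-1}$. All base--base edges are then preserved by the $G_q\simeq_q H_q$ strategy, while all hub--hub and hub--base pairs are edges in both structures, so the resulting map is a partial isomorphism. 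Hence $G_q^k\simeq_{\DP_{k,q}} H_q^k$, and by \cref{t:ef-dp-theorem} no $\FODP_k$ sentence of quantifier rank $q$ distinguishes $G_q^k$ from $H_q^k$. Since $q$ is arbitrary and the two graphs differ in $2k$-connectivity, $2k$-connectivity is not expressible in $\FODP_k$. The only non-routine step is the routing argument of the middle paragraph; everything else is a direct transcription of \cref{thm-strict-hierarchy}.
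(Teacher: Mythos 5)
Your proof is correct, and it establishes the lemma by a construction that is related to but genuinely different from the paper's. Both arguments share the same architecture: build two $\FO$-indistinguishable graphs on which every $\disjp_\ell$ predicate with $\ell\le k$ is \emph{identically true}, so that the $(\DP_{k,q})$-game reduces to the plain EF game, and then separate the graphs by $2k$-connectivity. The paper realizes this by taking the \emph{lexicographic product} with $K_{2k}$ of a $2$-connected graph (a long cycle plus an apex) and a graph with a cut vertex (two cycles sharing an apex): each vertex becomes a $2k$-clique, and since each of the $\le k$ requested paths meets any such clique at most twice, one can always route along a path of the base graph through free copies. You instead take the \emph{join} with $K_{2k-1}$ of a connected/disconnected $\FO$-indistinguishable pair, exactly mirroring \cref{thm-strict-hierarchy}, and route every non-adjacent pair through a single free hub; your counting ($a+b\le 2k-1$ via the case split on $c$) correctly guarantees enough free hubs, and the composition of Duplicator strategies across the join is standard. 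Your approach buys two things: it needs only a connected-versus-disconnected base pair (the paper's routing follows paths inside the base graph and therefore needs a $2$-connected-versus-not pair), and your two graphs differ precisely at $2k$-connectivity — $G_q^k$ is $2k$-connected while the $2k-1$ hubs separate $H_q^k$ — whereas the paper's lexicographic products are in fact both $2k$-connected (no set of fewer than $2k$ vertices can empty a $2k$-clique position) and are only separated at connectivity $2k+1$, so your construction matches the stated bound of the lemma more tightly. The paper's construction, in turn, is the one reused in the following lemma to separate $\FODP_k$ from $\FODP_{k+1}$.
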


\begin{proof}
  Let $k$ be an integer. For every integer $q$, we define two graphs $G_q$ and $H_q$ such that:
  \begin{itemize}
    \item $G_q$ is $2$-connected,
    \item $H_q$ is $1$-connected but not $2$-connected, and
    \item $G_q \simeq_q H_q$
  \end{itemize}
  For example, take $G_q$ the cycle with $2^{q+1}$ many elements, together with an apex vertex, while $H_q$ is the disjoint union of two cycles with $2^q$ many elements each, together with an apex vertex (see~\cref{fig:fodp-hierarchy}).
  \begin{figure}[t]
    \centering
    \begin{subfigure}[b]{.45\linewidth}
      \centering
      \begin{tikzpicture}
        \node (v1) at (0,1.7) [circle,draw,scale=.7] {$v_1$};
        \node (v2) at (1.2,1.2) [circle,draw,scale=.7] {$v_2$};
        \node (v3) at (1.7,0) [circle,draw,scale=.7] {$v_3$};
        \node (v4) at (1.2,-1.2) [circle,draw,scale=.7] {$v_4$};
        \node (v5) at (0,-1.7) [circle,draw,scale=.7] {$v_5$};
        \node (v6) at (-1.2,-1.2) [circle,draw,scale=.7] {$v_6$};
        \node (v7) at (-1.7,0) [circle,draw,scale=.7] {$v_7$};
        \node (v8) at (-1.2,1.2) [circle,draw,scale=.45] {$v_{2^{q+1}}$};

        \node (k1) at (0,0) [circle,draw,scale=.7] {$v_0$};

        \path
        (v1) edge [bend left=15] (v2)
        (v2) edge [bend left=15] (v3)
        (v3) edge [bend left=15] (v4)
        (v4) edge [bend left=15] (v5)
        (v5) edge [bend left=15] (v6)
        (v6) edge [bend left=15] (v7)
        (v7) edge [dotted,bend left=15] (v8)
        (v8) edge [bend left=15] (v1);

        \path
        (k1) edge (v1)
        (k1) edge (v2)
        (k1) edge (v3)
        (k1) edge (v4)
        (k1) edge (v5)
        (k1) edge (v6)
        (k1) edge (v7)
        (k1) edge (v8);
      \end{tikzpicture}
      \subcaption{$G_q$}
    \end{subfigure}
    \begin{subfigure}[b]{.45\linewidth}
      \centering
      \begin{tikzpicture}
        \node (v1) at (0,1.1) [circle,draw,scale=.6] {$v'_1$};
        \node (v2) at (1.1,0) [circle,draw,scale=.6] {$v'_2$};
        \node (v3) at (0,-1.1) [circle,draw,scale=.6] {$v'_3$};
        \node (v4) at (-1.1,0) [circle, draw,scale=.5] {$v'_{2^q}$};

        \node (v5) at (3,1.1) [circle,draw,scale=.4] {$v'_{2^q+1}$};
        \node (v6) at (4.1,0) [circle,draw,scale=.4] {$v'_{2^q+2}$};
        \node (v7) at (3,-1.1) [circle,draw,scale=.4] {$v'_{2^q+3}$};
        \node (v8) at (1.9,0) [circle,draw,scale=.4] {$v'_{2^{q+1}}$};

        \node (k1) at (1.5,1.5) [circle,draw,scale=.6] {$v'_0$};

        \path
        (v1) edge [bend left=35] (v2)
        (v2) edge [bend left=35] (v3)
        (v3) edge [dotted,bend left=35] (v4)
        (v4) edge [bend left=35] (v1);

        \path
        (v5) edge [bend left=35] (v6)
        (v6) edge [bend left=35] (v7)
        (v7) edge [dotted, bend left=35] (v8)
        (v8) edge [bend left=35] (v5);

        \path
        (k1) edge (v1)
        (k1) edge (v2)
        (k1) edge (v3)
        (k1) edge (v4)
        (k1) edge (v5)
        (k1) edge (v6)
        (k1) edge (v7)
        (k1) edge (v8);
      \end{tikzpicture}
      \subcaption{$H_q$}
    \end{subfigure}
    \caption{$\FODP$ hierarchy is strict}
    \label{fig:fodp-hierarchy}
  \end{figure}

  We then define $G_q^k$ (resp. $H_q^k$) as the lexicographical product of $G_q$ (resp. $H_q$) with $K_{2k}$, the clique with $2k$ elements.
  More precisely, if $G_q =(V,E)$, where $V=\{1,\ldots,n\}$, then $G_q^k := (V',E')$ where:
  \begin{itemize}
    \item $V' := \{v_{1,1},\ldots,v_{1,2k},\ldots,v_{n,1},\ldots,v_{n,2k}\}$
    \item $E' := \{ \{v_{i,j},v_{i',j'}\} ~:~ i=i' \vee (i,i')\in E\}$.
  \end{itemize}
  One can view $G_q^k$ as $2k$ copies of $G_q$ on top of each other. Vertices are replaced by $2k$-cliques, and edges are replaced by $(2k,2k)$-bicliques.
  A direct consequence of the definition is the following equivalence.

  \begin{claim}\label{cla-fo-eq}
    For all integers $q,k$, we have that $G_q^k \simeq_q H_q^k$.
  \end{claim}
  \begin{proof}
    Duplicator's strategy follows the one derived from $G_q \simeq_q H_q$.
    If Spoiler picks a vertex $v_{i,j}\in G_q^k$, then Duplicator can respond by choosing the vertex $v_{i',j}\in H_q^k$ where $v_{i'}\in H_q$ is Duplicator's respond to $v_i\in G_q$.
  \end{proof}

  We then show that over $G_q^k$ and $H_q^k$, the predicate $\disjp_k[~]$ is always true and therefore that, for these structures, $(\FODP_k)[q]$ collapses to $\FO[q]$.

  \begin{claim}\label{cla-disjp-true}
    For every integers $q,k$, for every $k$-tuples $\bar a, \bar b$, we have that
    $G_q^k$ and $H_q^k$ both model $\disjp_k[(a_1,b_1),\ldots,(a_k,b_k)]$.
  \end{claim}
  \begin{proof}
    The proofs for $G_q^k$ and $H_q^k$ are identical, so we only do it for $G_q^k$. Remember that $n$ is the number of vertices in $G_q$.
    The idea is that each of the $k$ paths uses at most two ``copies'' of each
    vertex of $G_q$, hence $2k$ ``copies'' is enough for all paths to exists.
    For every $i\le n$, let $B_i:= \{v_{i,j} ~:~ j\le 2k \}$, and $F_i := \{v_{i,j} ~:~ j \le 2k \wedge v_{i,j}\not\in \bar a \wedge v_{i,j}\not \in \bar b\}$. We call $B_i$ the set of vertices in {\em position} $i$, and $F_i$ the {\em free vertices} in position $i$.
    We then compute each path, starting with $(a_1,b_1)$.

    Let $i,j,i',j'$ such that $a_1=v_{i,j}$ and $b_1=v_{i',j'}$. If $i = i'$, then there is nothing to do as $a_1$ and $b_1$ are neighbors. Otherwise, note that for every $i''\le n$, $F_{i''}\neq \emptyset$, because there are only $2k-2$ elements among $a_2,\ldots,a_k,b_2,\ldots,b_k$.
    Since $G_q$ is a connected graph, there is a path from $i$ to $i'$. For every inner node $i''$ of this path, we can select a vertex $v\in F_{i''}$.\linebreak
    We can therefore create a path in $G_q^k$ from $a_1$ to $b_1$ where all inner vertices are free vertices. We then remove these vertices from the sets of free vertices.

    Let now $1<\ell\le k$, and let $i,j,i',j'$ such that $a_\ell=v_{i,j}$ and $b_\ell=v_{i',j'}$. We assume that the first $\ell-1$ paths have already been computed.
    Observe that here again, if $i=i'$ there is nothing to do. Otherwise, we again have that for every $i''$, $F_{i''}$ is not empty. This is because for every $s\le k$, the path from $a_s$ to $b_s$ intersects $B_{i''}$ at most twice \textit{\small (at most once for the inner vertices, and twice when the two endpoints are both in position $i''$)}. Therefore, we can select a path in $G_q$ from $i$ to $i'$ and for each $i''$ in this path, pick a vertex $v\in F_{i''}$.
  \end{proof}

  With Claim~\ref{cla-disjp-true}, we can replace formulas of $(\FODP_k)[q]$ by formulas of $\FO[q]$. Thanks to Claim~\ref{cla-fo-eq}, $G_q^k \simeq_q H_q^k$, we conclude that $G_q^k \simeq_{\DP_{k,q}} H_q^k$. So $\FODP_k$ cannot express $2k$-connectivity.
  Note that this bound is tight for these structures i.e.~$G_q^k \not\simeq_{\DP_{k+1,q}} H_q^k$.
\end{proof}

\begin{lemma}
  The $\FODP_k$ hierarchy is strict, that is, $\FODP_1\subsetneq
  \FODP_2\subsetneq \ldots$
\end{lemma}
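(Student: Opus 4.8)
The plan is to show $\FODP_k \subsetneq \FODP_{k+1}$ for every $k \geq 1$ by exhibiting a concrete property that separates the two fragments. The natural candidate comes directly from the preceding lemma: by \cref{lem-dp-kconn}, $2k$-connectivity is not expressible in $\FODP_k$. On the other hand, $2k$-connectivity \emph{is} expressible in $\FODP_{k+1}$, and in fact even in separator logic $\FOconn$, since $m$-connectivity for any fixed $m$ can be written with the $\conn_{m-1}$ predicate as in \cref{exp:connectivity}, and each $\conn_\ell$ predicate is equivalent to a $\disjp_{\ell+1}$ predicate by the observation at the start of this section. Thus $2k$-connectivity, using $\conn_{2k-1}$, translates into a formula using $\disjp_{2k}$ predicates. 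The only gap is that $2k$ may exceed $k+1$, so this direct translation does not by itself place $2k$-connectivity inside $\FODP_{k+1}$.

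First I would verify that $2k$-connectivity is genuinely expressible in $\FODP_{k+1}$, which requires a more careful argument than the $\conn$-translation. The cleanest route is to use Menger's theorem: a graph $G$ on more than $2k$ vertices is $2k$-connected if and only if for every pair of distinct vertices $x,y$ there exist $2k$ internally vertex-disjoint paths between them. This is exactly what the single predicate $\disjp_{2k}[(x,y),(x,y),\ldots,(x,y)]$ with all pairs equal to $(x,y)$ expresses. But again this needs $\disjp_{2k}$, not $\disjp_{k+1}$. To stay within $\FODP_{k+1}$ I would instead separate $\FODP_k$ from $\FODP_{k+1}$ using $(k+1)$-connectivity in place of $2k$-connectivity: by the same Menger-based argument, $(k+1)$-connectivity is expressible by the predicate $\disjp_{k+1}[(x,y),\ldots,(x,y)]$ guarded by $\forall x \forall y$, so it lies in $\FODP_{k+1}$. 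It then remains to prove that $(k+1)$-connectivity is \emph{not} expressible in $\FODP_k$.

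For the inexpressibility direction I would adapt the construction of \cref{lem-dp-kconn}. There the key idea was to take graphs $G_q, H_q$ differing in connectivity but satisfying $G_q \simeq_q H_q$, and to blow them up by a lexicographical product with a clique $K_{2k}$, so that any $k$ disjoint paths always exist (\cref{cla-disjp-true}) while the underlying first-order equivalence is preserved (\cref{cla-fo-eq}). The same blow-up by $K_{2k}$ makes all $\disjp_k$ predicates trivially true, so over the blown-up graphs $\FODP_k$ collapses to plain $\FO[q]$. I would therefore take $G_q$ and $H_q$ to be graphs that are $k$-connected versus not-$k$-connected but still $q$-equivalent (obtainable because $k$-connectivity for fixed $k$ is not $\FO$-definable, as it requires counting beyond the locality threshold, and $\simeq_q$ has finitely many classes), blow both up by $K_{2k}$, and conclude $G_q^k \simeq_{\DP_{k,q}} H_q^k$ while the blow-ups differ in $(k+1)$-connectivity. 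Since $q$ is arbitrary and $\equiv_{\DP_{k,q}}$ coincides with $\simeq_{\DP_{k,q}}$ by \cref{t:ef-dp-theorem}, no $\FODP_k$ sentence can separate the family, so $(k+1)$-connectivity is not $\FODP_k$-expressible.

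The main obstacle is the bookkeeping in the blow-up argument establishing that $\disjp_k$ remains trivially true on the product with $K_{2k}$ while the connectivity gap is exactly $(k+1)$ versus $k$: one must confirm that $2k$ clique-copies still suffice for $k$ disjoint paths when the base graphs are only guaranteed to differ in $(k+1)$-connectivity, which is precisely the content already proved in \cref{cla-disjp-true}, so that claim can be reused verbatim. The one subtlety worth checking is that the base graphs $G_q, H_q$ can simultaneously realize the connectivity gap and the required $\FO$-indistinguishability; here I would argue abstractly, as in \cref{thm-strict-hierarchy}, that since $k$-connectivity is not first-order definable and $\simeq_q$ partitions graphs into finitely many classes, suitable $G_q \simeq_q H_q$ with the desired connectivity difference must exist for every $q$. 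Combining the positive direction ($(k+1)$-connectivity $\in \FODP_{k+1}$) with this inexpressibility yields the strict containment $\FODP_k \subsetneq \FODP_{k+1}$ for all $k \geq 1$, as claimed.
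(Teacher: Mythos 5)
Your high-level strategy --- exhibit a property in $\FODP_{k+1}$ that is not in $\FODP_k$, reusing the blow-up structures of \cref{lem-dp-kconn} --- matches the paper's, and your positive direction is fine: by Menger's theorem, $(k+1)$-connectivity is expressed by $\forall x\forall y\,\bigl(x\neq y \rightarrow \disjp_{k+1}[(x,y),\ldots,(x,y)]\bigr)$ together with a first-order size condition, so it lies in $\FODP_{k+1}$. The gap is in the negative direction. The lexicographical product with $K_{2k}$ multiplies vertex connectivity by $2k$: a set $S$ disconnects $G\cdot K_{2k}$ only if the set of positions all of whose $2k$ copies lie in $S$ already disconnects $G$. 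Hence if you take base graphs $G_q$ and $H_q$ that are $k$-connected and not $k$-connected respectively --- but both connected, as the proof of \cref{cla-disjp-true} requires --- the blow-ups are at least $2k\cdot k$-connected and $2k\cdot 1$-connected respectively, so \emph{both} are $(k+1)$-connected. They do not differ in $(k+1)$-connectivity, and your models do not witness that $(k+1)$-connectivity lies outside $\FODP_k$. The only way to make a blow-up fail $(k+1)$-connectivity is to let $H_q$ be disconnected, but then $H_q^k$ is disconnected, \cref{cla-disjp-true} fails for it, and $\disjp_1$ already separates the two structures, collapsing the $\simeq_{\DP_{k,q}}$ equivalence. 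More generally, the smallest connectivity threshold this blow-up technique can place outside $\FODP_k$ is about $2k+1$, and $(2k+1)$-connectivity is not obviously in $\FODP_{k+1}$ (Menger would require $\disjp_{2k+1}$), so no connectivity threshold can give the tight single step of the hierarchy by this route.

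The paper sidesteps the issue by not separating via a connectivity threshold at all. It keeps the same structures $G_q^k\simeq_{\DP_{k,q}}H_q^k$ and distinguishes them by the tailor-made $\FODP_{k+1}$ sentence $\exists a_1\ldots\exists b_{k+1}\,\neg\disjp_{k+1}[(a_1,b_1),\ldots,(a_{k+1},b_{k+1})]$: in $H_q^k$ one takes the first $k$ pairs to be the $2k$ copies of the cut vertex of $H_q$, so that every such copy is an endpoint of one of the first $k$ paths and hence unavailable as an internal vertex, and the last pair to be vertices on the two sides of that cut, which makes the $(k+1)$-st path impossible; in $G_q^k$ the $2$-connectivity of $G_q$ lets any $(k+1)$-st path route around a single blocked column, so the sentence fails there. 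If you want to keep your Menger-based positive direction, you need a genuinely different argument that $(k+1)$-connectivity is not expressible in $\FODP_k$; as written, that step does not go through.
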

\begin{proof}
  Consider the structures in the proof of \cref{lem-dp-kconn}, which
  are indistinguishable in $\FODP_k$. The following
  sentence of $\FODP_{k+1}$ distinguishes $G_q^k$ and $H_q^k$:
  $$\exists a_1\ldots\exists b_{k+1}~ \neg \disjp_{k+1}[(a_1,b_1),\ldots, (a_{k+1},b_{k+1})] $$
  In $H_q^k$, pick $i$ such that $H_q\setminus {i}$ is not connected ($i'$ and $i''$ two disconnected vertices). Then pick $a_j = v_{i,j}$ if $j\le k$, $b_j = v_{i,k+j}$ if $j\le k$, and finally $a_{k+1}=v_{i',1}$, $b_{k+1}=v_{i'',1}$.
  Intuitively, this means that the vertices $v_{i,j}$ are ``blocked'' for every $j\leq 2k$ by the first $k$ paths and can therefore not be used for the $(k+1)$-st path such that this disjoint path does not exist.

  $G_q^k$ does not satisfy the formula because even if we ``block'' such a clique, there is still a disjoint path connecting every pair of vertices because $G_q$ is 2-connected.
\end{proof}

% !TEX root = main.tex

\section{Connection to other logics}
\label{sec:other-logics}

In this section, we compare the expressive power of the separator logic and the disjoint-paths logic with monadic second-order logic and transitive-closure logic.
\cref{fig:connection} depicts the connections between these logics.

\subsection{Monadic second-order logic}

Monadic second-order logic ($\MSO_1$) allows quantification over sets of vertices in addition to the first-order quantifiers.
It has a higher expressive power than first-order logic because for example connectivity is expressible in $\MSO_1$ and every first-order formula can be expressed with the first-order quantifiers.
Connectivity is expressible by
\[\forall R \Big( \big(\exists x R(x)\wedge\exists x\neg R(x)\big) \to\exists x\exists y \big( R(x)\wedge\neg R(y)\wedge E(x,y) \big) \Big)\]

By an extension of this formula, we can say that a given set $S$ is connected:
\begin{align*}
    \mathrm{conn}\text{-}\mathrm{set}(S) := \forall R \Big( \big(&R \subseteq S \wedge \exists x ~ R(x)\wedge\exists x ~ (S(x)\wedge\neg R(x)) \big)\\
    &\to\exists x\exists y\big(R(x)\wedge\neg R(y) \wedge S(y)\wedge E(x,y)\big) \Big)
\end{align*}

Furthermore, we can express the connectivity operators in $\MSO_1$. The connectivity operator $\conn_0(x,y)$ can be expressed by:
\[\conn_0(x,y):=\forall R\Big(R(x)\wedge\forall v\forall w \big( (R(v)\wedge E(v,w))\to R(w)\big)\to R(y)\Big)\]
and $\conn_k(x,y,z_1,\ldots,z_k)$ using $\mathrm{conn}\text{-}\mathrm{set}(S)$ by:
\[\conn_k(x,y,z_1,\ldots,z_k) := \exists S~ \big( \mathrm{conn}\text{-}\mathrm{set}(S) \wedge S(x) \wedge S(y) \wedge \bigwedge\limits_{i\le k} \neg S(z_i) \big).\]

We can express the disjoint paths predicates $\disjp_k[(x_1,y_1),\ldots,(x_k,y_k)]$ by:
\begin{align*}
    \exists S_1,\ldots,S_k \bigg( &\bigwedge\limits_{i\le k} \Big( S_i(x_i) \wedge S_i(y_i) \wedge \mathrm{conn}\text{-}\mathrm{set}(S_i) \Big)\\
    \wedge &\bigwedge\limits_{i<j\le k} \forall z \Big( \big(S_i(z) \wedge S_j(z)\big) \rightarrow \big((z=x_i\vee z=y_i) \wedge (z=x_j \vee z=y_j) \big) \Big)\bigg)
\end{align*}

Since the disjoint paths operators are expressible in $\MSO_1$, $\FODP$ is included in $\MSO_1$.
This inclusion is strict because it is well-known that bipartiteness is expressible in $\MSO_1$:
\begin{align*}
    \exists R_1\exists R_2\Big(\forall x\big(R_1(x)\leftrightarrow\neg R_2(x)\big)\wedge\bigwedge_{i\leq 2}\forall x\forall y\big((R_i(x)\wedge R_i(y))\to\neg E(x,y)\big)\Big)
\end{align*}

but we showed in \cref{thm-dp-bipart} that bipartiteness is not expressible in $\FODP$.

\subsection{Transitive-closure logic}

Transitive-closure logic $\TC^i_j$ is the enrichment of first-order logic with the transitive-closure operator $[\TC_{\bar x,\bar y}\varphi(\bar x,\bar y)]$ where $\bar x$ and $\bar y$ are tuples of length $i$ and $\varphi$ is a formula with at most~$j$ free variables other than $\bar x$ and $\bar y$.

Every $\FOconn_k$ formula can be expressed in $\TC^1_k$ because the $\Conn_k$ operator can be expressed with the help of the transitive-closure operator: \[\Conn_k(x,y,z_1,\ldots,z_k)=[\TC_{v,w} E(v,w)\wedge v\neq z_1\wedge\ldots\wedge v\neq z_k\wedge w\neq z_1\wedge\ldots\wedge w\neq z_k](x,y)\]
In fact, $\TC^1_k$ is more expressible than $\FOconn_k$, as it
can express bipartiteness~\cite[Example~7.2]{grohe2008logic}.
On the other hand, $2$-connectivity can naturally be expressed in $\FOconn_1$, but presumably not in $\TC^1_0$.
\begin{conjecture}
$2$-connectivity cannot be expressed in $\TC^1_0$.
\end{conjecture}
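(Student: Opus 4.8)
The plan is to prove the conjecture through an Ehrenfeucht–Fra\"iss\'e argument adapted to $\TC^1_0$, based on the following structural observation. Because the inner formula of every transitive-closure operator in a $\TC^1_0$ sentence has no free variables other than the two bound variables $v,w$, each such operator $[\TC_{v,w}\psi(v,w)](s,t)$ takes the transitive closure of a binary relation $\psi^{\mathcal G}$ that is \emph{absolute}: it depends only on the graph $\mathcal G$ and not on the current assignment of the outer free variables. Hence a fixed $\TC^1_0$ sentence $\phi$ that uses transitive-closure operators of nesting depth at most $d$ can be read as a plain first-order sentence $\widehat\phi$ over the expansion $\mathcal G^+$ of $\mathcal G$ by finitely many new binary relations $T_1,\dots,T_m$, stratified by nesting depth, where $T_i=\mathrm{TC}(\psi_i)$ and $\psi_i$ is first-order over $\mathcal G$ expanded by $T_1,\dots,T_{i-1}$. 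The contrast with $\conn_1$ is exactly here: the natural relation ``reachable in $\mathcal G$ while avoiding $z$'' is \emph{not} absolute, so it cannot be the argument of a $\TC^1_0$ operator, and the outer first-order quantifiers may use a deleted vertex $z$ only as an endpoint $s$ or $t$, never inside $\psi$.

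Given this, I would reduce the inexpressibility to a purely combinatorial construction. For all $q,d\in\N$ I would aim to build two graphs $G=G_{q,d}$ and $H=H_{q,d}$ such that $G$ is $2$-connected, $H$ is connected but has a cut vertex, and Duplicator wins the first-order $q$-round game on the expansions $G^+$ and $H^+$ by \emph{every} family of parameter-free transitive-closure relations of nesting depth at most $d$ (with inner formulas of bounded complexity). By the structural observation and \cref{t:ef-theorem}, no $\TC^1_0$ sentence of bounded quantifier rank and TC-nesting depth would then separate $G$ from $H$; since every $\TC^1_0$ sentence has some fixed $q$ and $d$, this shows that $2$-connectivity is not $\TC^1_0$-expressible. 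Concretely, I would look for highly symmetric instances — for example a single long cycle with each vertex replaced by a fixed gadget (making $G$ $2$-connected) versus two such cycles sharing one gadget that serves as the cut vertex of $H$ — chosen so that $G$ and $H$ are isomorphic up to any fixed radius and vertex-transitive enough that \emph{every} absolute first-order-definable binary relation has the same coarse shape in both graphs.

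The core of the argument would be an induction on the TC-nesting depth $d$ showing that the transitive closures $T_1,\dots,T_m$ refine the structure of $G$ and $H$ ``in parallel'': one maintains a strengthened invariant in the game guaranteeing that, for each absolute relation $\psi_i$ already handled, its transitive closure links the current pebbles in $G$ exactly when it links the corresponding pebbles in $H$. Since connectivity and bipartiteness — the two known $\TC^1_0$-expressible global properties — are captured by the transitive closures of $E$ and of its square respectively, the invariant must in particular be robust against these; the symmetric construction should therefore make both $G$ and $H$ connected and (say) non-bipartite, so that these specific closures yield no information.

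The step I expect to be the main obstacle is precisely the control of nested parameter-free transitive closures. A single transitive closure already captures genuinely global information, and nested closures can in principle build progressively finer absolute invariants of the whole graph, so the construction must defeat all of them simultaneously. Symmetric graphs that fool first-order logic are often broken by one cleverly chosen transitive closure that happens to single out the cut vertex of $H$ — for instance through a definable relation whose reachability classes behave differently near a cut vertex — and ruling out \emph{every} such relation, rather than merely the first-order ones, is exactly what separates a theorem from the present conjecture. A successful proof will hinge on finding instances whose automorphism group and local structure are rich enough that no absolute binary relation, and no iterated transitive closure of one, can detect the global distinction between being $2$-connected and having a cut vertex.
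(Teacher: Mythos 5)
There is no proof in the paper to compare against: the statement is explicitly posed as a conjecture and left open. Your proposal does not close it either, and you say so yourself, so what you have is a plan with a genuine gap rather than a proof. The sound part is the opening structural observation: in $\TC^1_0$ every transitive-closure operator is applied to a formula whose only free variables are the two closure variables, so each such operator computes the closure of an \emph{absolute} binary relation determined by the graph alone, and a fixed sentence unfolds into a first-order sentence over an expansion by finitely many stratified closure relations. This correctly isolates why $\conn_1$ seems out of reach for $\TC^1_0$ (the relation ``reachable while avoiding $z$'' is parameterized by $z$), and it is a reasonable reduction of the conjecture to a combinatorial statement via \cref{t:ef-theorem}.

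The gap is that everything after the reduction is unexecuted. You never construct the pairs $G$, $H$, never specify the gadgets, and never prove the invariant that every absolute first-order-definable binary relation and every iterated transitive closure of such relations ``has the same coarse shape'' in both graphs. That last statement is the entire difficulty: the inner formulas range over all quantifier ranks, so the graphs must be built against a bound on both the outer quantifier rank and the complexity of the inner formulas at every nesting level (your indexing $G_{q,d}$ already suppresses this third parameter), and a single well-chosen absolute relation can in principle detect a cut vertex globally even when the two graphs are locally isomorphic. Ruling out all such relations simultaneously is exactly the open content of the conjecture, and no candidate construction is verified, or even fully specified, in your proposal. As it stands this is a useful framing of the problem, not a proof.
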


\begin{figure}[ht]
\centering
\begin{tikzpicture}[scale=.9]
\node (fo) at (-2.7,0) {$\FO$};
\node (fo0) at (0,0) {$\FOconn_0$};
\node (fo1) at (3,0) {$\FOconn_1$};
\node (fodots) at (6,0) {$\dots$};
\node (fok) at (9,0) {$\FOconn_k$};

\node (dp1) at (0,2) {$\FODP_1$};
\node (dp2) at (3,2) {$\FODP_2$};
\node (dpdots) at (6,2) {$\dots$};
\node (dpk) at (9,2) {$\FODP_{k+1}$};

\node (tc0) at (0,-2) {$\TC^1_0$};
\node (tc1) at (3,-2) {$\TC^1_1$};
\node (tcdots) at (6,-2) {$\dots$};
\node (tck) at (9,-2) {$\TC^1_k$};

\node (mso) at (11.5,0) {$\MSO$};

\node (v0) at (-1.7,0) {$\subsetneq$};
\node (v1) at (1.5,0) {$\subsetneq$};
\node (v2) at (4.5,0) {$\subsetneq$};
\node (v3) at (7.5,0) {$\subsetneq$};
\node (v31) at (10.5,0) {$\subsetneq$};
\node[rotate=35] (v31) at (10.5,-1) {$\subsetneq$}; % alex : other connection with MSO
\node[rotate=-35] (v31) at (10.5,1) {$\subsetneq$};

\node (v4) at (1.5,-2) {$\subseteq$};
\node (v5) at (4.5,-2) {$\subseteq$};
\node (v6) at (7.5,-2) {$\subseteq$};
\node[rotate=-90] (v7) at (0,-1) {$\subsetneq$};
\node[rotate=-90] (v8) at (3,-1) {$\subsetneq$};
% \node[rotate=35] (v8) at (1.5,-1) {$\overset{?}{\subset}$};
\node[rotate=215] (v8) at (1.5,-1) {$\subset$};
\node[rotate=35] (v81) at (1.3,-0.8) {\scriptsize ?};
\node[rotate=-90] (v9) at (9,-1) {$\subsetneq$};

\node (v10) at (1.5,2) {$\subsetneq$};
\node (v11) at (4.5,2) {$\subsetneq$};
\node (v12) at (7.5,2) {$\subsetneq$};
\node[rotate=90] (v13) at (0,1.2) {$\equiv$};
\node[rotate=90] (v14) at (3,1.2) {$\subsetneq$};
\node[rotate=90] (v15) at (9,1.2) {$\subsetneq$};
\end{tikzpicture}
\caption{Connections between the logics}
\label{fig:connection}
\end{figure}

% !TEX root = main.tex

\section{Conclusion}
\label{sec:conclusion}

We studied first-order logic enriched with connectivity
predicates tailored to express algorithmic graph properties that
are commonly studied in contemporary parameterized algorithmics.
This yielded separator logic, which can query connectivity after the deletion
of a bounded number of elements, and disjoint-paths logic, which can
express the disjoint-paths problem. We demonstrated a rich expressiveness
that arises from the interplay of these predicates with the nested
quantification of first-order logic. We also studied the limits of
expressiveness of these new logics.

In a companion paper, we studied the model-checking problem
for separator logic and proved that it is fixed-parameter tractable
parameterized by formula size on classes of graphs that exclude a
fixed topological minor~\cite{pilipczuk2021algorithms}. This yields a powerful algorithmic meta-theorem
for separator logic. On the other hand, while the
disjoint-paths problem is fixed-parameter
tractable on general graphs~\cite{robertson1995graph}, it is not clear that the model-checking
problem for disjoint-paths logic is fixed-parameter tractable beyond
graphs of bounded treewidth. This remains a challenging question
for future work.

It will also be interesting to study other extensions
of first-order logic that can express further interesting algorithmic
graph problems, such as reachability with regular
paths queries. This would, in the simplest case, allow to express
bipartiteness and the odd cycle transversal problem.
On the other hand, it is very likely that with general
regular paths queries, we will get intractability beyond
bounded treewidth graphs.

%%
%% Bibliography
%%

%% Please use bibtex,

\bibliography{ref}

\appendix

\end{document}